\documentclass{article}
\usepackage[numbers]{natbib}
\usepackage[english]{babel}
\usepackage{algpseudocode}
\usepackage{makecell}
\usepackage[english]{babel}
\usepackage{array} 
\usepackage{booktabs} 
\usepackage[letterpaper,top=2cm,bottom=2cm,left=3cm,right=3cm,marginparwidth=1.75cm]{geometry}
\usepackage[ruled,linesnumbered,vlined]{algorithm2e}
\usepackage{algpseudocode}
\usepackage{amssymb}
\usepackage{amsmath}
\usepackage{amsthm}
\usepackage{graphicx}
\usepackage[colorlinks=true, allcolors=blue]{hyperref}
\usepackage{natbib}
\usepackage{subcaption}
\usepackage{pgfplots}
\pgfplotsset{compat=1.17}
\usepackage{tikz}
 \usetikzlibrary{patterns}
\usepackage{pgfplotstable}
    \pgfplotsset{
    name nodes near coords/.style={
        every node near coord/.append style={
            name=#1-\coordindex,
            alias=#1-last,
        },
    },
    name nodes near coords/.default=coordnode
    }

\captionsetup{justification=centering} 
\newtheorem{theorem}{Theorem}[section]
\newtheorem{definition}[theorem]{Definition}

\newtheorem{corollary}{Corollary}[theorem]
\newtheorem{lemma}[theorem]{Lemma}
\newtheorem{invariant}[theorem]{Invariant}
\newtheorem{example}[theorem]{Example}

\newcommand{\MPB}{\mathsf{MPB}}
\newcommand{\bX}{\mathbf{X}}

\newcommand{\bc}{\mathbf{c}}
\newcommand{\bp}{\mathbf{p}}

\DeclareMathOperator*{\argmin}{argmin}

\title{Approximately EFX and PO Allocations for Bivalued Chores}
\author{
Zehan Lin\thanks{IOTSC, University of Macau. {\{yc47490,xiaoweiwu,yc17423\}@um.edu.mo}. The authors are ordered alphabetically.}
\and
Xiaowei Wu$^*$
\and 
Shengwei Zhou$^*$
}
\date{}

\begin{document}

\maketitle

\begin{abstract}
    We consider the computation for allocations of indivisible chores that are approximately EFX and Pareto optimal (PO). 
    Recently, Garg et al. (2024) show the existence of $3$-EFX and PO allocations for bi-valued instances, where the cost of an item to an agent is either $1$ or $k$ (where $k > 1$) by rounding the (fractional) earning restricted equilibrium.
    In this work, we improve the approximation ratio to $(2-1/k)$, while preserving the Pareto optimality.
    Instead of rounding fractional equilibrium, our algorithm starts with the integral EF1 equilibrium for bi-valued chores, introduced by Garg et al. (AAAI 2022) and Wu et al. (EC 2023), and reallocates items until approximate EFX is achieved. 
    We further improve our result for the case when $k=2$ and devise an algorithm that computes EFX and PO allocations.
\end{abstract}

\section{Introduction}

Fair allocation has received significant attention over the past few decades in the fields of computer science, economics, and mathematics. 
The problem focuses on allocating a set $M$ of $m$ items among a group $N$ of $n$ agents, where agents have heterogeneous valuation functions for the items. 
When the valuation functions assign positive values, the items are considered as goods, such as resources; when the valuation functions assign negative values, the items are regarded as chores, such as tasks. 
The goal is to compute an allocation of the items that is fair to all agents.
In this work, we focus on the allocation of chores, where each agent has an additive cost function that assigns a non-negative cost to every item, and explore the existence of allocations that are fair and efficient.
 
Envy-freeness (EF) is one of the most well-studied fairness notions. 
An allocation is envy-free if no agent wants to exchange her bundle of items with any other agent. 
Unfortunately, envy-free allocations are not guaranteed to exist for indivisible items. 
Consequently, researchers have focused on the relaxations of envy-freeness. 
Envy-freeness up to one item (EF1) and envy-freeness up to any item (EFX) are two widely studied relaxations. 
Generally speaking, EF1 allocations require that the envy between any two agents can be eliminated by removing some item, while EFX allocations require that the envy can be eliminated by removing any item. 
In addition to fairness, efficiency is another important measure of the quality of allocations. 
The existence of fair and efficient allocations has recently drawn significant attention. 
Unfortunately, efficiency and fairness often compete with each other, and many fair allocations exhibit poor efficiency guarantees. 
Pareto optimality (PO) is one of the most widely used measures for efficiency. 
An allocation is called PO if no other allocation can improve the outcome for one agent without making another agent worse off. 

It has been shown that EF1 allocations always exist for goods~\cite{conf/sigecom/LiptonMMS04}, chores, and the mixture of goods and chores~\cite{journals/aamas/AzizCIW22, conf/approx/BhaskarSV21}. 
There are several simple polynomial-time algorithms for computing EF1 allocations such as envy-cycle elimination and round-robin~\cite{conf/sigecom/LiptonMMS04,conf/approx/BhaskarSV21}.
Compared to the well-established results for EF1 allocations, the existence of EFX allocations remains a major open problem. 
For general cost functions, the divide-and-choose algorithm can be applied to compute EFX allocations when $n = 2$. 
However, it is still unknown whether EFX allocations exist for chore instances with more than two agents. 
So far, EFX allocations for chores are shown to exist for two types of chore~\cite{conf/atal/0001LRS23}, the case of $m \leq 2n$~\cite{garg2024fair, conf/sagt/KobayashiMS23}, and agents with leveled preferences~\cite{journals/corr/abs-2109-08671}.
Aziz et al.~\cite{journals/ai/AzizLMWZ24} show that EFX allocations can be computed for identical ordering (IDO) instances, which is extended by
Kobayashi et al. \cite{conf/sagt/KobayashiMS23} to the case when the cost functions of all but one agent are IDO.
For bi-valued instances (where the cost of any item to any agent can only take one of two fixed values), Zhou and Wu~\cite{journals/ai/ZhouW24} propose a polynomial-time algorithm that computes EFX allocations for three agents.
The result is generalized by Kobayashi et al.~\cite{conf/sagt/KobayashiMS23} to the case of three agents with personalized bi-valued cost functions and improved by Garg et al.~\cite{conf/ijcai/GargMQ23}, who show the existence of EFX and PO allocations for three bi-valued agents.

Given the challenges in computing EFX allocations, many studies focus on the approximations of EFX allocations.
Zhou and Wu~\cite{journals/ai/ZhouW24} demonstrate the existence of a polynomial-time algorithm that computes ($2 + \sqrt{6}$)-approximate EFX (($2 + \sqrt{6}$)-EFX) allocations for three agents. 
Recently, Afshinmehr et al.~\cite{journals/corr/abs-2410-18655} and Christoforidis et al.~\cite{conf/ijcai/ChristoforidisS24} improve this result, showing that $2$-EFX allocations can be computed in polynomial time for three agents. 
For a general number of agents, the state-of-the-art results for approximating EFX allocation are by Garg et al.~\cite{garg2024fair}, who show the existence of $4$-EFX allocations for general additive instances, and $3$-EFX and PO allocations for bi-valued instances.

\subsection{Our Results and Techniques}

In this work, we consider the fair allocation of chores for bi-valued instances, where the cost of any item on any agent is either $1$ or $k$, where $k>1$.
We also call them $\{1,k\}$-instances.

\smallskip

In our first result, we improve the state-of-the-art approximation ratio for EFX and PO allocations for bi-valued chores to $(2 - 1/k)$. 

\medskip
\noindent
{\bf Result 1} (Theorem \ref{theorem:2EFX_PO_bivalued}){\bf .} \label{Result1}
{\em For $\{1, k\}$-instances of indivisible chores, there exists an algorithm that computes ($2 - 1/k$)-EFX and PO allocations in polynomial time.}
	
\medskip

As in recent works~\cite{conf/atal/EbadianP022,conf/aaai/GargMQ22,conf/sigecom/WuZZ23}, our analysis follows the Fisher market analysis framework, a classic and widely used approach for computing allocations that are both fair and PO.
By associating a payment to every chore and computing an equilibrium, we can focus on achieving fairness, where the Pareto optimality will be implied by the equilibrium.
Following this framework, Garg et al.~\cite{garg2024fair} start with an envy-free fractional earning restricted equilibrium (where the earning of an agent from any item is bounded) and devise a rounding scheme that computes a $3$-EFX and PO allocation\footnote{As an intermediate step, they first round the equilibrium to a $2$-EF2 and PO allocation.}.
Instead, in this work we start with the pEF1 integral equilibrium for bi-valued chores~\cite{conf/aaai/GargMQ22,conf/sigecom/WuZZ23}, and design an algorithm that executes a sequence of item reallocations and returns a $(2-1/k)$-EFX and PO allocation in polynomial time.
A crucial difference between our analysis and that of Garg et al.~\cite{garg2024fair} is that since we start with a pEF1 equilibrium, to compute a $(2-1/k)$-EFX and PO allocation, it suffices to consider the case when every agent possesses at most one high payment item.
Therefore, by partitioning the agents into two groups depending on whether they have high payment items, we show that it suffices to reallocate items across the two groups.
Since our algorithm and analysis do not require rounding fractional items, it is much simpler than that of Garg et al.~\cite{garg2024fair}.

\smallskip

Moreover, we show that better results can be obtained for the case when $k=2$.

\medskip
\noindent
{\bf Result 2} (Theorem \ref{theorem:EFX_k=2}){\bf .}
{\em For $\{1,2\}$-instances of indivisible chores, there exists an algorithm that computes EFX and PO allocations in polynomial time.}
\medskip

We note that very few results are known for EFX and PO allocations for chores.
In fact, even the existence of EF1 and PO allocations for chores remains a major open problem, which is in sharp contrast to the allocation of goods~\cite{journals/teco/CaragiannisKMPS19}.
The existence of EFX and PO allocations for chores has only been established for some special cases, e.g., 
for three bi-valued agents~\cite{conf/ijcai/GargMQ23},
for bi-valued instances with $m\leq 2n$~\cite{garg2024fair} and for binary instances ($\{0,1\}$-instances)~\cite{journals/corr/abs-2308-12177}.
For some slightly more general cases, it has been shown that EFX is incompatible with Pareto optimality.
For example, Garg et al.~\cite{conf/ijcai/GargMQ23} show that EFX and PO allocations do not always exist for the case with three agents, and the case with two types of chores.
Tao et al.~\cite{journals/corr/abs-2308-12177} demonstrate that EFX and PO allocations do not always exist for ternary instances, where the cost of any item on any agent can only take values in $\{0, 1, 2\}$.

Our algorithm and analysis follow a similar framework as our first result, e.g., we start with the pEF1 equilibrium and reallocate items to strengthen the fairness guarantee.
However, to improve the approximation ratio further, we need a finer classification of the agents into groups. 
Fortunately, for the pEF1 equilibrium for $\{1,2\}$-instances, it can be shown that the earning of agents can only take values in $\{z,z+1,z+2\}$, for some integer $z$.
Therefore, it suffices to reallocate items between agents with earning $z$ and those with earning $z+2$, until one of the two groups becomes empty, or the allocation becomes EFX.
By proving that every agent with earning $z$ will participate in item reallocation at most once, we show that our algorithm returns an EFX and PO allocation in polynomial time.

\subsection{Other Related Work}

Due to the vast literature on the fair allocation problem, in the following we only review the results regarding the approximation and computation of EFX allocations.
For a comprehensive overview of other related works, please refer to the surveys by Aziz et al.~\cite{journals/sigecom/AzizLMW22}, Amanatidis et al.~\cite{journals/ai/AmanatidisABFLMVW23}.

\paragraph{EFX Allocations for Goods.}
Compared to the allocation of chores, the case of goods admits more fruitful results.
The EFX allocations are shown to exist for two agents with arbitrary valuations or any number of agents with identical valuations by Plaut et al.~\cite{journals/siamdm/PlautR20}; for three agents by Chaudhury et al.~\cite{journals/jacm/ChaudhuryGM24} and Akrami et al.~\cite{conf/sigecom/AkramiACGMM23}; and for two types of agents by Mahara~\cite{journals/dam/Mahara23}.
For bi-valued instances, Amanatidis et al.~\cite{journals/tcs/AmanatidisBFHV21} and Garg et al.~\cite{journals/tcs/GargM23} establish the existence and computation of EFX and PO allocations, respectively.
Regarding the approximation of EFX allocations, Chan et al.~\cite{conf/ijcai/Chan00W19} demonstrate the existence of $0.5$-EFX allocations, while Amanatidis et al.~\cite{journals/tcs/AmanatidisMN20} improve the approximation ratio to $0.618$. 
Recently, Amanatidis et al.~\cite{conf/sigecom/AmanatidisFS24} improve this ratio further to $2/3$ for some special cases, e.g., seven agents or tri-valued goods.

\paragraph{EFX Allocations with Unallocated Items.}
For the allocation of goods, Chaudhury et al.~\cite{journals/siamcomp/ChaudhuryKMS21} show that EFX allocations exist
if we are allowed to leave at most $(n - 1)$ items unallocated. 
The result is improved to $(n - 2)$ items by Berger et al.~\cite{conf/aaai/BergerCFF22}, who also show that EFX allocations with at most one unallocated item can be computed for four agents. 
For the allocation of chores, EFX allocations with at most $(n - 1)$ unallocated items have been shown to exist for bi-valued instances by Zhou and Wu~\cite{journals/ai/ZhouW24}.

\section{Preliminary}

We consider how to fairly allocate a set of $m$ indivisible items (chores) $M$ to a group of $n$ agents $N$. 
We call a subset of items, e.g. $X \subseteq M$, a bundle. 
Each agent $i \in N$ has an additive cost function $c_i:2^M \to \mathbb{R}^+ \cup \{0\}$ that assigns a cost to every bundle of items. 
For convenience, we use $c_i(e)$ to denote $c_i(\{e\})$, the cost of agent $i \in N$ on item $e \in M$, thus $c_i(X) = \sum_{e \in X} c_i(e)$ for all $X \subseteq M$.
We use $\bc = (c_1, \dots, c_n)$ to denote the cost functions of agents. 
For any subset of items $X\subseteq M$ and item $e\in M$, we use $X+e$ and $X-e$ to denote $X\cup \{e\}$ and $X\setminus \{e\}$, respectively.
An allocation $\bX = (X_1, \dots, X_n)$ is an $n$-partition of the items $M$ such that $X_i \cap X_j = \emptyset$ for all $i \neq j$ and $\cup_{i \in N}X_i = M$, where agent $i$ receives bundle $X_i$. 
Given an instance $\mathcal{I} = (N, M, \bc)$, our goal is to find an allocation $\bX$ that is fair to all agents and also efficient.

\begin{definition}[EF]
An allocation $\bX$ is envy-free (EF) if for any $i, j \in N$ we have $c_i(X_i) \leq c_i(X_j)$.
\end{definition}

\begin{definition}[EF1]
An allocation \textbf{X} is called envy-free up to one item (EF1) if for any agents $i, j \in N$, either $X_i = \emptyset$, or there exists an item $e \in X_i$ such that $c_i(X_i - e) \leq c_i(X_j)$.
\end{definition}

\begin{definition}[$\beta$-EFX]
For any $\beta \geq 1$, an allocation $\bX$ is $\beta$-approximate envy-free up to any item ($\beta$-EFX) if for any agents $i, j \in N$, either $X_i = \emptyset$, or for any item $e \in X_i$, we have
\begin{equation*}
    c_i(X_i - e) \leq \beta \cdot c_i(X_j).
\end{equation*}
When $\beta = 1$, the allocation is EFX.   
\end{definition}

Notice that all the fairness notions mentioned above are scale-free, i.e., if an allocation satisfies one of these notions, then it remains to satisfy the same notion if we rescale any cost function.

\begin{definition}[PO]
An allocation $\bX'$ \emph{Pareto dominates} another allocation $\bX$ if $c_i(X_i') \leq c_i(X_i)$ for all $i \in N$ and the inequality is strict for at least one agent. 
An allocation X is \emph{Pareto optimal} (PO) if $\bX$ is not dominated by any other allocation.
\end{definition}

\begin{definition}[Bi-valued Instances] 
An instance is called bi-valued if there exist constants $a, b \geq 0$ such that for any agent $i \in N$ and item $e \in M$, we have $c_i(e) \in \{a, b\}$.
\end{definition}

We remark that when $a = 0$ or $b = 0$, such instances reduce to binary instances, for which EFX and PO allocations exist and can be computed in polynomial time~\cite{journals/corr/abs-2308-12177}.
Hence, in the following, we assume w.l.o.g. that $a \neq 0$ and $b\neq 0$.
Additionally, we can rescale all the cost functions such that $c_i(e) \in \{1, k\}$, where $k > 1$.
Note that for all $i\in N$, there exists at least one item $e \in M$ with cost $c_i(e) = 1$ as otherwise we can rescale the cost functions so that $c_i(e) = 1$ for all $e\in M$.
For convenience, we refer to bi-valued instances mentioned above as $\{1, k\}$-instances. 

\paragraph{Fisher Market.}
In the Fisher market, there is a payment vector $\bp$ that assigns each chore $e \in M$ a payment $p(e) > 0$.
For any subset $X \subseteq M$, let $p(X) = \sum_{e \in X} p(e)$.
Given the payment vector $\bp$, we define the pain-per-buck ratio $\alpha_{i, e}$ of agent $i$ for chore $e$ as $\alpha_{i, e} = \frac{c_i(e)}{p(e)}$, and the minimum pain-per-buck (MPB) ratio $\alpha_{i}$ of agent $i$ to be $\alpha_{i} = \min_{e \in M} \{\alpha_{i, e}\}$. 
Intuitively, when agent $i$ receives chore $e$, she needs to perform $c_i(e)$ units of work and receives $p(e)$ units of payment in return.
Therefore every agent would like to receive items with low cost and high payment.
Note that while the costs are subjective, the payments are objective.
For each agent $i$, we define $\MPB_i = \{e \in M : \alpha_{i, e} = \alpha_i\}$ as the set of items with the minimum pain-per-buck ratios and we call each item $e \in \MPB_i$ an MPB item of agent $i$. 
An allocation $\bX$ with payment $\bp$ forms a (Fisher market) equilibrium $(\bX, \bp)$ if each agent only receives her MPB chores, i.e. $\forall i \in N, X_i \subseteq \MPB_i$.

\medskip

Following the First Welfare Theorem~\cite{mas1995microeconomic}, for any market equilibrium $(\bX,\bp)$, the allocation $\bX$ is PO.
For completeness, we provide a short proof here.

\begin{lemma}
    For any equilibrium $(\bX, \bp)$, the allocation $\bX$ is PO.
\end{lemma}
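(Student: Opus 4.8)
The plan is to prove the contrapositive: if $\bX$ is \emph{not} PO, then $(\bX,\bp)$ is \emph{not} an equilibrium. So suppose some allocation $\bX'$ Pareto dominates $\bX$, meaning $c_i(X_i') \le c_i(X_i)$ for every $i\in N$ with strict inequality for at least one agent. I would like to derive a contradiction by comparing the total payment of the items across the two allocations, since $\bp$ is a fixed payment vector and $p(M) = \sum_i p(X_i) = \sum_i p(X_i')$ is the same in both allocations.

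The key observation is that in an equilibrium every agent $i$ only holds MPB chores, so for every $e \in X_i$ we have $c_i(e) = \alpha_i\, p(e)$, hence $c_i(X_i) = \alpha_i\, p(X_i)$. On the other hand, for \emph{any} bundle $Y$, since $\alpha_{i,e} \ge \alpha_i$ for all $e$, we have $c_i(Y) = \sum_{e\in Y} \alpha_{i,e} p(e) \ge \alpha_i\, p(Y)$; in particular $c_i(X_i') \ge \alpha_i\, p(X_i')$. Combining with $c_i(X_i') \le c_i(X_i) = \alpha_i\, p(X_i)$ and $\alpha_i > 0$ (which holds because payments are positive and costs are nonnegative — I should note that if some $\alpha_i = 0$ the argument still goes through, or handle it by the remark that all relevant costs are positive), we get $p(X_i') \le p(X_i)$ for every agent $i$. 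Summing over all $i$ gives $p(M) = \sum_i p(X_i') \le \sum_i p(X_i) = p(M)$, so equality holds throughout, forcing $p(X_i') = p(X_i)$ for every $i$, and therefore $c_i(X_i') \ge \alpha_i p(X_i') = \alpha_i p(X_i) = c_i(X_i)$ for every $i$. This contradicts the assumption that the domination is strict for some agent.

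I would then conclude that no allocation Pareto dominates $\bX$, i.e.\ $\bX$ is PO. The main thing to be careful about is the case $\alpha_i = 0$: this can only happen if agent $i$ has an item of cost $0$, which is excluded under the standing assumption that cost functions take values in $\{1,k\}$ with $k>1$ (so all costs are positive); I would add a one-line remark to that effect, or alternatively observe that when $\alpha_i = 0$ the chain of inequalities $c_i(X_i') \ge \alpha_i p(X_i') = 0 = c_i(X_i)$ forces $c_i(X_i')=c_i(X_i)$ directly. There is no real obstacle here — the proof is a short accounting argument — so the only "hard part" is stating it cleanly enough that the summation-and-equality step is unambiguous.
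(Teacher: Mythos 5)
Your proof is correct and is essentially the paper's own argument: both rest on the two facts that $c_i(X_i)=\alpha_i\,p(X_i)$ for an equilibrium bundle and $c_i(Y)\ge \alpha_i\,p(Y)$ for an arbitrary bundle, so that the fixed total payment $p(M)$ certifies that no Pareto improvement is possible (the paper phrases this as $\bX$ minimizing $\sum_{i}c_i(X_i)/\alpha_i$, which is the same accounting in a different normalization). Your side remark on $\alpha_i=0$ is harmless but note that in that degenerate case the per-agent inequality $p(X_i')\le p(X_i)$ is not actually recovered; it is the standing assumption that all costs lie in $\{1,k\}$ (which the paper's proof also uses implicitly when dividing by $\alpha_i$) that disposes of it.
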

\begin{proof}
    If the allocation $\bX$ with payment $\bp$ is an equilibrium, then for any agent $i\in N$, any item $e\in X_i$ and any agent $j\neq i$ we have 
    \begin{equation*}
        \frac{c_i(e)}{\alpha_i} = \frac{c_i(e)}{\alpha_{i,e}} = p(e) = \frac{c_j(e)}{\alpha_{j,e}} \le \frac{c_j(e)}{\alpha_{j}}.
    \end{equation*}
    
    Thus the allocation minimizes the objective $\sum_{i\in N} \frac{c_i(X_i)}{\alpha_i}$. 
    Any Pareto improvement would strictly decrease this objective, which leads to a contradiction.
    So the allocation $\bX$ is PO.
\end{proof}
  
\begin{definition}[pEF1]
An equilibrium $(\bX, \bp)$ is called payment envy-free up to one item (pEF1) if for any $i, j \in N$, either $X_i = \emptyset$, or there exists an item $e \in X_i$, such that $p(X_i - e) \leq p(X_j)$.
\end{definition}

\begin{definition}[$\beta$-pEFX]
For any $\beta \geq 1$, an equilibrium $(\bX, \bp)$ is called $\beta$-approximate envy-free up to any item ($\beta$-EFX), if for any two agents $i, j \in N$ either $X_i = \emptyset$ or for any item $e \in X_i$, we have
\begin{equation*}
    p(X_i - e) \leq \beta \cdot p(X_j).
\end{equation*}
when $\beta$ = 1, the equilibrium is pEFX.
\end{definition}

Next, we establish a few lemmas showing that the approximate payment envy-freeness implies approximate envy-freeness.
Note that vice-versa is not true, e.g., it is possible that an allocation is EFX but not pEFX.
In other words, the payment envy-freeness is strictly stronger than envy-freeness.

\begin{lemma} \label{lemma:pEF1_implies_EF1}
If an equilibrium $(\bX, \bp)$ is pEF1, then the allocation $\bX$ is EF1.
\end{lemma}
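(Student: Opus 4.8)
The plan is to exploit the defining property of a Fisher market equilibrium, namely that each agent holds only her MPB chores, to translate the payment-based envy-freeness into cost-based envy-freeness. Fix two agents $i, j \in N$ with $X_i \neq \emptyset$. By the pEF1 assumption there is an item $e \in X_i$ with $p(X_i - e) \le p(X_j)$. I want to conclude $c_i(X_i - e) \le c_i(X_j)$, which is exactly the EF1 condition for the pair $(i,j)$.

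The key step is the following chain of inequalities, using $\alpha_i = \alpha_{i,f}$ for every $f \in X_i$ (since $X_i \subseteq \MPB_i$) and $\alpha_i \le \alpha_{i,f}$ for every $f \in X_j$ (since $\alpha_i$ is the \emph{minimum} pain-per-buck ratio): first, $c_i(X_i - e) = \sum_{f \in X_i - e} \alpha_{i,f}\, p(f) = \alpha_i \sum_{f \in X_i - e} p(f) = \alpha_i \cdot p(X_i - e)$; second, $\alpha_i \cdot p(X_i - e) \le \alpha_i \cdot p(X_j)$ by the pEF1 inequality and $\alpha_i > 0$; third, $\alpha_i \cdot p(X_j) = \sum_{f \in X_j} \alpha_i\, p(f) \le \sum_{f \in X_j} \alpha_{i,f}\, p(f) = c_i(X_j)$. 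Combining the three gives $c_i(X_i - e) \le c_i(X_j)$. Since $i, j$ were arbitrary, $\bX$ is EF1.

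I do not anticipate a genuine obstacle here; the only point requiring a little care is the direction of the MPB inequalities — equality on the bundle $X_i$ (agent $i$ owns only MPB items, so the pain-per-buck ratio is exactly $\alpha_i$ there) versus a one-sided inequality $\alpha_i \le \alpha_{i,f}$ on $X_j$ (items $j$ owns need not be MPB for $i$). One should also note $\alpha_i > 0$, which holds because all costs are non-negative and, by the $\{1,k\}$-instance assumption, agent $i$ has at least one item of cost $1$, so $\alpha_i$ is a positive number and multiplying the payment inequality by $\alpha_i$ preserves its direction. The same template, with $\beta$ inserted, will presumably be reused for the $\beta$-pEFX $\Rightarrow$ $\beta$-EFX statements that follow.
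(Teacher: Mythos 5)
Your proposal is correct and follows essentially the same argument as the paper: write $c_i(X_i - e) = \alpha_i \cdot p(X_i - e) \leq \alpha_i \cdot p(X_j) \leq c_i(X_j)$, using the equilibrium property $X_i \subseteq \MPB_i$ for the equality and the minimality of $\alpha_i$ for the last inequality. The extra care you take about the direction of the MPB inequalities and the positivity of $\alpha_i$ is sound but does not change the route.
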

\begin{proof}
Since $(\bX, \bp)$ is pEF1, for any $i, j \in N$, there exists an item $e \in X_i$, such that $p(X_i - e) \leq p(X_j)$. Therefore, according to the definition of MPB allocation, we have:
\begin{equation*}
    c_i(X_i - e) = \alpha_i \cdot p(X_i - e) \leq \alpha_i \cdot p(X_j) \leq c_i(X_j),
\end{equation*}
where the last inequality holds since the pain-per-buck ratio of agent $i$ on any item is at least $\alpha_i$.
\end{proof}

\begin{lemma} \label{lemma:bpEFX_implies_bEFX}
    Given any equilibrium $(\bX, \bp)$, if agent $i\in N$ is $\beta$-pEFX towards agent $j\in N$, then $i$ is $\beta$-EFX towards $j$. 
\end{lemma}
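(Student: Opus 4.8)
The plan is to mimic the argument already used for Lemma~\ref{lemma:pEF1_implies_EF1}, now applied item-by-item rather than to a single removable item. First I would dispose of the trivial case: if $X_i = \emptyset$, then agent $i$ is vacuously $\beta$-EFX towards $j$, so assume $X_i \neq \emptyset$ and fix an arbitrary item $e \in X_i$. By the hypothesis that $i$ is $\beta$-pEFX towards $j$, we have $p(X_i - e) \leq \beta \cdot p(X_j)$.

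Next I would translate payments into costs using the two defining properties of the equilibrium. Since $(\bX,\bp)$ is an equilibrium, $X_i \subseteq \MPB_i$, so every item $e' \in X_i$ satisfies $\alpha_{i,e'} = \alpha_i$, i.e. $c_i(e') = \alpha_i \cdot p(e')$; summing over $X_i - e$ gives $c_i(X_i - e) = \alpha_i \cdot p(X_i - e)$. On the other side, for every item $e'' \in X_j$ we have $\alpha_{i,e''} \geq \alpha_i$, hence $c_i(e'') \geq \alpha_i \cdot p(e'')$; summing over $X_j$ gives $c_i(X_j) \geq \alpha_i \cdot p(X_j)$.

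Combining these, the chain
\begin{equation*}
    c_i(X_i - e) = \alpha_i \cdot p(X_i - e) \leq \alpha_i \cdot \beta \cdot p(X_j) \leq \beta \cdot c_i(X_j)
\end{equation*}
holds, where the first inequality is the $\beta$-pEFX hypothesis scaled by the nonnegative factor $\alpha_i$ and the second uses $c_i(X_j) \geq \alpha_i \cdot p(X_j)$. Since $e \in X_i$ was arbitrary, agent $i$ is $\beta$-EFX towards $j$.

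There is essentially no obstacle here: the only thing to be slightly careful about is that the two inequalities point in opposite directions (equality on the bundle $i$ keeps because those are MPB items, inequality $\geq$ on $X_j$ because they need not be MPB items for $i$), and that multiplying the pEFX inequality by $\alpha_i \geq 0$ preserves its direction. One could also note in passing that the same reasoning with $\beta = 1$ recovers ``pEFX implies EFX,'' and that the converse fails precisely because an allocation can be EFX in cost terms without the payments witnessing it.
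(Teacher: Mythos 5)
Your proof is correct and follows essentially the same route as the paper's: fix an arbitrary $e \in X_i$, use $c_i(X_i - e) = \alpha_i \cdot p(X_i - e)$ from the MPB property of the equilibrium, apply the $\beta$-pEFX hypothesis, and bound $\alpha_i \cdot p(X_j) \leq c_i(X_j)$ via the definition of $\alpha_i$. The extra care you take with the empty-bundle case and the direction of each inequality is fine but adds nothing beyond the paper's one-line chain.
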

\begin{proof}
Since $i$ is $\beta$-pEFX towards $j$, for any $e \in X_i$, we have $p(X_i - e) \leq \beta \cdot p(X_j)$.
Given that $(\bX, \bp)$ is an equilibrium, we have:
\begin{equation*}
    c_i(X_i - e) = \alpha_i \cdot p(X_i - e) \leq \beta \cdot \alpha_i \cdot p(X_j) \leq \beta \cdot c_i(X_j). \qedhere
\end{equation*}
\end{proof}

\begin{corollary}\label{corollary:notbpEFX_implies_notbpEFX}
    Given any equilibrium $(\bX, \bp)$, if agent $i$ is not $\beta$-EFX towards $j$, then $i$ is not $\beta$-pEFX towards $j$.
\end{corollary}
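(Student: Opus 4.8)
The plan is to observe that this statement is precisely the contrapositive of Lemma~\ref{lemma:bpEFX_implies_bEFX}. That lemma asserts that, for any equilibrium $(\bX,\bp)$, if agent $i$ is $\beta$-pEFX towards $j$ then $i$ is $\beta$-EFX towards $j$. Negating both sides yields exactly the claim: if $i$ fails to be $\beta$-EFX towards $j$, then $i$ cannot be $\beta$-pEFX towards $j$. So the corollary follows immediately, with nothing left to prove.

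If one prefers a self-contained argument, I would unpack the definitions directly. Suppose $i$ is not $\beta$-EFX towards $j$; since $X_i\neq\emptyset$ in this case, there is an item $e\in X_i$ with $c_i(X_i - e) > \beta\cdot c_i(X_j)$. Because $(\bX,\bp)$ is an equilibrium, every item of $X_i$ lies in $\MPB_i$, so $c_i(X_i - e) = \alpha_i\cdot p(X_i - e)$; and since the pain-per-buck ratio of $i$ on every item is at least $\alpha_i$, we have $c_i(X_j)\le \alpha_i\cdot p(X_j)$ would go the wrong way, so instead use $c_i(X_j) = \alpha_i \cdot \sum_{e'\in X_j}\alpha_{i,e'}p(e')/\alpha_i \ge \alpha_i\cdot p(X_j)$. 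Combining, $\alpha_i\cdot p(X_i - e) > \beta\cdot\alpha_i\cdot p(X_j)$, and dividing by $\alpha_i>0$ gives $p(X_i - e) > \beta\cdot p(X_j)$, i.e., $i$ is not $\beta$-pEFX towards $j$.

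There is essentially no obstacle here: the only thing to be careful about is the direction of the inequality $c_i(X_j)\ge \alpha_i\cdot p(X_j)$, which is what makes payment envy-freeness strictly stronger than envy-freeness in the chores setting, and the edge case $X_i=\emptyset$, which is excluded precisely because "$i$ is not $\beta$-EFX towards $j$" forces $X_i\neq\emptyset$ by definition. I would present the one-line contrapositive version as the proof.
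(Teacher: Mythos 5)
Your proposal is correct and matches the paper exactly: the paper states this as an immediate corollary of Lemma~\ref{lemma:bpEFX_implies_bEFX} with no further argument, and taking the contrapositive of that lemma is precisely the intended proof. Your optional self-contained unpacking is also sound (the key inequality $c_i(X_j)\geq \alpha_i\cdot p(X_j)$ goes in the right direction), though the one-line contrapositive is all that is needed.
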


\section{\texorpdfstring{$(2 - 1/k)$}{}-EFX and PO Allocations for \texorpdfstring{$\{1,k\}$}{}-Instances} \label{Section3}

In this section we present a polynomial-time algorithm for computing $(2 - 1/k)$-EFX and PO allocations for $\{1,k\}$-valued instances.
For bi-valued instances, several works showed that EF1 and PO allocations exist and can be computed in polynomial time~\cite{conf/atal/EbadianP022,conf/aaai/GargMQ22,conf/sigecom/WuZZ23} based on the Fisher market framework.
For any given bi-valued instance, they compute a pEF1 equilibrium, which implies an EF1 and PO allocation.
Specifically, the equilibrium has the following properties.

\begin{lemma}[\cite{conf/aaai/GargMQ22,conf/sigecom/WuZZ23}]
\label{lemma:1_k_payment_equilibrium}
    There exists an algorithm that given any $\{1,k\}$-instance $(N,M,\bc)$ computes an equilibrium $(\bX, \bp)$ that is pEF1, and satisfies $p(e) \in \{1,k\}$ for all $e\in M$ in $O(nm^2)$ time.
\end{lemma}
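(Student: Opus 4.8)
The plan is to recap the Fisher-market local-search algorithm of \cite{conf/aaai/GargMQ22,conf/sigecom/WuZZ23}, which adapts the Barman--Krishnamurthy scheme for goods to chores. The algorithm maintains, as an invariant, an equilibrium $(\bX,\bp)$ with $p(e)\in\{1,k\}$ for all $e\in M$, and gradually makes it pEF1. For the initialization I would set $p(e)=1$ and give $e$ to some agent $i$ with $c_i(e)=1$ whenever such an agent exists, and otherwise set $p(e)=k$ and give $e$ to an arbitrary agent. Then every cost-$1$ item (for anyone) has payment $1$, so $\alpha_i=1$ for every agent $i$, and each agent holds only items on which her pain-per-buck ratio equals $1$ --- either a $(c_i,p)=(1,1)$ item or a universally-$k$-cost item with payment $k$ --- so $(\bX,\bp)$ is an equilibrium with all payments in $\{1,k\}$, establishing the invariant.

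For one iteration I would use the directed MPB graph on $N\cup M$ with arcs $i\to e$ for $e\in X_i$ and $e\to j$ for $e\in\MPB_j$. Let $b$ be an agent maximizing $p(X_i)-\min_{e\in X_i}p(e)$. If $b$ is pEF1 towards a minimum-payment agent the algorithm halts; otherwise run a BFS from $b$ and consider the reachable agents. If some reachable agent $\ell$ has sufficiently small payment, the alternating path $b=a_0\to e_1\to a_1\to\cdots\to e_t\to a_t=\ell$ lets us perform the swaps ``$a_{j-1}$ hands $e_j$ to $a_j$'' for $j=1,\dots,t$: since every $e_j$ moves to an agent for whom it is MPB, $(\bX,\bp)$ stays an equilibrium, the payments are untouched (so the invariant is preserved), and the net effect moves one item out of $X_b$ and one into $X_\ell$, strictly decreasing $p(X_b)$. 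If no such $\ell$ exists, I would instead raise the payments of all items in the component reachable from $b$; in the bivalued case this only changes payments from $1$ to $k$, keeps the equilibrium (the reachable agents' minimum pain-per-buck ratios all scale together), and strictly enlarges the reachable component, so after at most $n+m$ such adjustments a transfer becomes available again.

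When the algorithm halts, $b$ is pEF1 towards a minimum-payment agent; since $b$ maximizes $p(X_i)-\min_{e\in X_i}p(e)$, this forces $p(X_i)-\min_{e\in X_i}p(e)\le p(X_j)$ for all $i,j$, i.e., $(\bX,\bp)$ is pEF1, while it remains an equilibrium with payments in $\{1,k\}$ --- exactly the assertion. For termination I would use a potential on the payment profile sorted in non-increasing order: each transfer strictly decreases it lexicographically, and each price adjustment (which raises some payment from $1$ to $k$) happens at most $m$ times in total; combining the resulting bound on the number of iterations with the $O(nm)$ cost of a BFS per iteration gives the $O(nm^2)$ running time.

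The main obstacle --- and the only place where the bivalued structure is truly used --- is ensuring that the price-adjustment step always keeps payments in $\{1,k\}$. In a general Fisher market the prices of a component are scaled by a factor $1+\varepsilon$ and take unboundedly many values; here one must exploit that the only payment-$k$ items are universally-$k$-cost to argue that whenever a price raise is triggered the relevant items all have payment $1$, so the raise sends them exactly to $k$ --- the delicate point being the interaction of this raise with universally-$k$-cost items (on which every agent's pain-per-buck ratio is $1$) already lying in the component, which is where one shows that a ``stuck'' agent's bundle is monochromatic in payment. Turning this, together with the accounting of transfers between consecutive price adjustments, into the clean potential behind the polynomial $O(nm^2)$ bound --- rather than merely finite termination --- is the technical core of the proof.
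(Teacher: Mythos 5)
First, note that the paper does not actually prove this lemma: it is imported as a black box from \cite{conf/aaai/GargMQ22,conf/sigecom/WuZZ23}, and the closest thing to a proof in the paper is the recap of Algorithm~\ref{alg:ef1andpo} in Section~4, whose correctness (Lemmas~\ref{lemma:leximin}, \ref{lemma:Group_Invariant}, \ref{lemma:Earning_Invariant}) is likewise stated without proof. Your reconstruction matches that construction in outline --- the $M^+/M^-$ initialization with payments $k$ and $1$, the observation that $\alpha_i=1$ for every agent afterwards, the big-earner/least-earner loop, MPB-path transfers, and price raises --- so the skeleton is right. One local slip: the big earner should maximize $\hat p_i=\min_{e\in X_i}p(X_i-e)=p(X_i)-\max_{e\in X_i}p(e)$, not $p(X_i)-\min_{e\in X_i}p(e)$; with your quantity the halting condition does not imply pEF1 for agents other than $b$, and your claimed conclusion ``$p(X_i)-\min_{e\in X_i}p(e)\le p(X_j)$ for all $i,j$'' is pEFX, which is stronger than what the algorithm can guarantee.

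The genuine gap is the one you flag yourself and then do not close: why the price-raise step keeps every payment in $\{1,k\}$, and why the number of iterations is polynomial. Asserting that ``in the bivalued case this only changes payments from $1$ to $k$'' presupposes that the component being raised never contains an item whose payment is already $k$ --- neither an item of $M^+$ nor an item raised in an earlier round --- and that is exactly the nontrivial invariant. The cited algorithm does not raise a dynamically computed reachable component; it precomputes the agent groups $N_1,\dots,N_R$ from the initial equilibrium (Lemma~\ref{lemma:leximin}), raises an entire group at once, and deletes it from $U$ so that each group is raised at most once. The group structure (items of a higher group cost $k$ to every agent of a lower group, and $M^+\subseteq\bigcup_{i\in N_R}X_i^0$ with $N_R$ never raised, Property~(4) of Lemma~\ref{lemma:Group_Invariant}) is precisely what guarantees both that every raised item had payment $1$ and that there are at most $n$ raises; your dynamic-component version has neither guarantee without further argument. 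Similarly, the lexicographic potential does not by itself yield the $O(nm^2)$ bound, and your alternating-path transfer can change the earnings of intermediate agents when the two swapped items have payments $1$ and $k$, which breaks the claim that ``the net effect moves one item out of $X_b$ and one into $X_\ell$'' at the level of earnings. Since you explicitly defer these points as ``the technical core,'' the proposal is an accurate description of the cited approach but not a proof of the lemma.
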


For an equilibrium satisfying $p(e) \in \{1,k\}$ for all $e\in M$, we call it a $\{1,k\}$-payment equilibrium.

\subsection{Properties of \texorpdfstring{$\{1,k\}$}{}-Payment Equilibrium}

Depending on the payment vector $\bp$, we divide the items into two groups $L$ and $H$, where $L$ contains all items with low payment and $H$ contains all items with high payment.
We further categorize agents according to whether they receive high payment items.

\begin{definition}
    Chores are categorized as:
    \begin{itemize}
        \item $L = \{e \in M: p(e) = 1\}$, 
        \item $H = \{e \in M: p(e) = k\}$.
    \end{itemize}
    
    Agents are categorized as:
    \begin{itemize}
        \item $N_L = \{i \in N: X_i \subseteq L\}$,
        \item $N_H = \{i \in N: |X_i \cap H| \geq 1\}$.
    \end{itemize}
\end{definition}

We can assume that both $L$ and $H$ are non-empty, as otherwise, all items have the same payment and the equilibrium is already pEFX.
Therefore $N_H$ is non-empty.
We can further assume that $N_L$ is non-empty because otherwise the equilibrium is $(2-1/k)$-pEFX: for all agents $i,j\in N = N_H$ and any $e\in X_i$, we have
\begin{equation*}
    p(X_i - e) \leq p(X_j) + k - 1 = (1 + \frac{k-1}{p(X_j)})\cdot p(X_j) \leq (2-\frac{1}{k})\cdot p(X_j),
\end{equation*}
where the first inequality holds since $(\bX,p)$ is pEF1 and the second inequality holds since $X_j\cap H \neq \emptyset$.

Next we establish some useful properties for $\{1,k\}$-payment equilibrium $(\bX, \bp)$.
Note that these properties hold for any $\{1,k\}$-payment equilibrium, regardless of whether it is pEF1 or not.

\begin{lemma}\label{lemma:1k_equilibrium_properties}
For any $\{1,k\}$-payment equilibrium $(\bX, \bp)$, we have the following properties:
\begin{enumerate}
    \item For any agent $i \in N$, we have $\alpha_i \in \{1, \frac{1}{k}\}$;
    \item For any agent $i\in N_L$, we have $\alpha_i = 1$;
    \item For any agent $i \in N_L$ and item $e \in H$, we have $c_i(e) = k$ and $e \in \MPB_i$.
\end{enumerate}
\end{lemma}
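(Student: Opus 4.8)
The plan is to establish the three properties essentially by unpacking the definition of a $\{1,k\}$-payment equilibrium together with the observation (from the preliminary section) that every agent has at least one item of cost $1$. Throughout, recall that $\alpha_{i,e} = c_i(e)/p(e)$ with $c_i(e) \in \{1,k\}$ and $p(e) \in \{1,k\}$, so $\alpha_{i,e}$ can only take values in $\{1/k, 1, k\}$.

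For property (1), I would argue as follows. Fix an agent $i$. Since there exists an item $e$ with $c_i(e)=1$, we have $\alpha_{i,e} = 1/p(e) \le 1$, hence $\alpha_i = \min_e \alpha_{i,e} \le 1$. On the other hand, for every item $e'$ we have $\alpha_{i,e'} = c_i(e')/p(e') \ge 1/k$ since the numerator is at least $1$ and the denominator is at most $k$. Therefore $\alpha_i \ge 1/k$. Combining with the fact that $\alpha_i$ is one of the three possible ratio values $\{1/k, 1, k\}$ and $1/k \le \alpha_i \le 1$, we conclude $\alpha_i \in \{1, 1/k\}$.

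For property (2), let $i \in N_L$, so $X_i \subseteq L$, meaning every item $e \in X_i$ has $p(e) = 1$. If $X_i$ is nonempty, pick any $e \in X_i$; since $(\bX,\bp)$ is an equilibrium, $e \in \MPB_i$, so $\alpha_i = \alpha_{i,e} = c_i(e)/1 = c_i(e) \ge 1$. Combined with property (1), this forces $\alpha_i = 1$. (If $X_i = \emptyset$ one can still invoke property (1) to get $\alpha_i \in \{1,1/k\}$; whether the authors intend $N_L$ to include empty-bundle agents, and whether $\alpha_i$ could be $1/k$ in that degenerate case, is the one point I would check against their conventions — most likely they implicitly assume every agent receives at least one chore, or the claim is only used for agents with nonempty bundles.) For property (3), take $i \in N_L$ and $e \in H$, so $p(e) = k$. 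By property (2), $\alpha_i = 1$. Since $\alpha_{i,e} = c_i(e)/k \ge 1/k$ with equality iff $c_i(e) = 1$, and since $\alpha_{i,e} \ge \alpha_i = 1$ would require $c_i(e) \ge k$, i.e. $c_i(e) = k$; we need to rule out $c_i(e) = 1$. If $c_i(e) = 1$, then $\alpha_{i,e} = 1/k < 1 = \alpha_i$, contradicting the definition of $\alpha_i$ as the minimum. Hence $c_i(e) = k$, which gives $\alpha_{i,e} = k/k = 1 = \alpha_i$, so $e \in \MPB_i$.

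The main obstacle here is not any deep argument — all three parts are short — but rather getting the quantifier bookkeeping exactly right, in particular handling the boundary case of agents with empty bundles in property (2) consistently with how $N_L$ is defined, and making sure property (3) correctly derives "$c_i(e) = k$" from "$\alpha_i$ cannot exceed $\alpha_{i,e}$" rather than the other way around. Everything else is a direct consequence of the value restrictions $c_i(e), p(e) \in \{1,k\}$ and the equilibrium/MPB conditions.
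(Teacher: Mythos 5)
Your proposal is correct and follows essentially the same argument as the paper: property (1) from the value restriction $\alpha_{i,e}\in\{1/k,1,k\}$ plus the existence of a cost-$1$ item, property (2) from $p(e)=1$ on $X_i$ forcing $\alpha_i=c_i(e)\geq 1$, and property (3) from $c_i(e)\geq\alpha_i\cdot k=k$. Your extra remark about empty bundles in $N_L$ is a reasonable caution, but it does not change the substance; the paper's proof implicitly treats the nonempty case as well.
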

\begin{proof}
We prove these properties one by one.
\begin{itemize}
    \item 
    Fix any agent $i\in N$. Since $c_i(e)\in\{1,k\}$ and $p(e)\in\{1,k\}$ for all $e\in M$, we have $\alpha_{i,e}\in \{ k,1,\frac{1}{k} \}$.
    As there exists at least one item $e$ with $c_i(e) = 1$, we have $\alpha_i \leq \alpha_{i,e} = \frac{1}{p(e)} \leq 1$, which implies the first property.
    
    \item 
    Fix any $i\in N_L$.
    Since $p(e) = 1$ for all $e\in X_i$, we have $\alpha_i = \frac{c_i(e)}{p(e)} = c_i(e) \geq 1$.
    Combining this with the first property, we have $\alpha_i = 1$.
        
    \item 
    Fix any agent $i\in N_L$ and item $e\in H$, we have $c_i(e) = \alpha_{i,e}\cdot p(e) \geq \alpha_i\cdot k = k$.
    Therefore we have $c_i(e) = k$ and $\alpha_{i,e} = 1 = \alpha_i$, which implies $e\in \MPB_i$.
    \qedhere
\end{itemize}
\end{proof}

Note that while it is possible that $\alpha_i = \frac{1}{k}$ for some $i\in N_H$, it happens only if $c_i(e) = 1$ and $p(e) = k$ for all $e\in X_i$.
Moreover, all low payment items are not in $\MPB_i$.

\smallskip

Next, we establish more properties for pEF1 $\{1,k\}$-payment equilibrium.

\begin{lemma}\label{lemma:size-of-lowpaymentitems}
For any $\{1,k\}$-payment equilibrium $(\bX, \bp)$, if $(\bX, \bp)$ is pEF1 for agent $i\in N_H$ (i.e., $i$ is pEF1 towards all other agents), then we have $|X_i \cap L| \leq \min_{j\in N_L}\{|X_j|\}$.
\end{lemma}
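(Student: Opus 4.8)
The plan is to derive the bound by a counting argument on payments, exploiting the pEF1 condition for agent $i\in N_H$ against every agent in $N_L$. Fix an arbitrary agent $j\in N_L$ that attains the minimum $\min_{j'\in N_L}\{|X_{j'}|\}$. Since $i\in N_H$, the bundle $X_i$ contains at least one high payment item $h\in X_i\cap H$. I would use $h$ as the item to be removed in the pEF1 guarantee of $i$ towards $j$: there exists some item $e\in X_i$ with $p(X_i-e)\le p(X_j)$, and because removing the most expensive (weight-$k$) item minimizes the residual payment, we may take $e=h$, so that $p(X_i-h)\le p(X_j)$.

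Next I would lower bound $p(X_i-h)$ using only the low payment items of $X_i$: every item in $X_i\cap L$ has payment exactly $1$ and none of them equals $h$ (since $h\in H$), so $p(X_i-h)\ge |X_i\cap L|$. On the other side, I would upper bound $p(X_j)$ using the fact that $j\in N_L$, which means $X_j\subseteq L$ by definition, so every item in $X_j$ has payment $1$ and hence $p(X_j)=|X_j|=\min_{j'\in N_L}\{|X_{j'}|\}$. Chaining these inequalities gives $|X_i\cap L|\le p(X_i-h)\le p(X_j)=\min_{j'\in N_L}\{|X_{j'}|\}$, which is exactly the claim.

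The only subtlety — and the step I would be most careful about — is justifying that we may choose the removed item in the pEF1 condition to be a high payment item $h$. The definition of pEF1 only asserts existence of some $e\in X_i$ with $p(X_i-e)\le p(X_j)$; but since all payments are positive and $h$ has the largest possible payment value $k$ among $\{1,k\}$, we have $p(X_i-h)\le p(X_i-e)$ for every $e\in X_i$, so the inequality continues to hold with $e$ replaced by $h$. (If $X_i$ happens to contain only the single item $h$, then $p(X_i-h)=0\le p(X_j)$ trivially, and $|X_i\cap L|=0$, so the bound holds as well.) With this observation in place the argument is just the two payment estimates above, and no case analysis on $\alpha_i$ or on the structure of $X_j$ beyond $X_j\subseteq L$ is needed.
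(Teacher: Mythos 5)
Your proposal is correct and follows essentially the same argument as the paper: both rest on the observations that removing the high payment item minimizes the residual payment, that this residual is at least $|X_i\cap L|$, and that $p(X_j)=|X_j|$ for $j\in N_L$ (the paper just phrases it as a proof by contradiction). No gaps.
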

\begin{proof}
    Assume otherwise that $|X_i \cap L| > |X_j|$ for some $j\in N_L$.
    Since $|X_i \cap H| > 0$, we have 
    \begin{equation*}
        \min_{e \in X_i}\{p(X_i - e)\} = p(X_i) - k \geq |X_i \cap L| > |X_j| = p(X_j),
    \end{equation*}
    which implies that $i$ is not pEF1 towards $j$ and is a contradiction.
\end{proof}

\begin{lemma}\label{lemma:payment<k}
    For any $\{1, k\}$-payment equilibrium $(\bX, \bp)$, if there exists an agent $i$ that is pEF1 but not $(2 - 1/k)$-pEFX towards another agent $j$, then we have $p(X_j) < k$.
\end{lemma}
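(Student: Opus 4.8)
The plan is to translate the two hypotheses into an upper and a lower bound on $p(X_i)$, and then eliminate $p(X_i)$ to read off the claimed bound on $p(X_j)$. The key structural fact I would use is that in a $\{1,k\}$-payment equilibrium every item has payment $1$ or $k$, so deleting a single item from any bundle decreases its payment by at least $1$ and by at most $k$. I would also note at the outset that $X_i \neq \emptyset$, since otherwise agent $i$ would trivially be $(2-1/k)$-pEFX towards $j$.

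First I would invoke that $i$ is pEF1 towards $j$: there is an item $e \in X_i$ with $p(X_i - e) \le p(X_j)$, and since $p(e) \le k$ this gives
\[
p(X_i) \;\le\; p(X_j) + k.
\]
Next, since $i$ is not $(2-1/k)$-pEFX towards $j$, there is an item $e' \in X_i$ with $p(X_i - e') > (2 - 1/k)\, p(X_j)$, and since $p(e') \ge 1$ this gives
\[
p(X_i) \;>\; \left(2 - \tfrac{1}{k}\right) p(X_j) + 1.
\]

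Combining the two displays yields $\left(2 - \tfrac1k\right) p(X_j) + 1 < p(X_j) + k$, i.e. $\left(1 - \tfrac1k\right) p(X_j) < k - 1 = k\left(1 - \tfrac1k\right)$. Since $k>1$ we have $1 - \tfrac1k > 0$, so dividing both sides by $1-\tfrac1k$ preserves the inequality and gives $p(X_j) < k$, as desired. The degenerate case $X_j = \emptyset$ (hence $p(X_j)=0$) is consistent with this chain and immediately implies the conclusion, so it needs no separate handling.

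This is a short calculation, so I do not anticipate a real obstacle; the only points requiring a little care are observing that $X_i$ is nonempty (which is exactly what the failure of approximate pEFX buys us) and checking that the coefficient $1 - 1/k$ is strictly positive, so that the final division keeps the inequality in the right direction.
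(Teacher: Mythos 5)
Your proof is correct and is essentially the paper's argument run in the contrapositive direction: the paper assumes $p(X_j)\ge k$ and deduces that $i$ would be $(2-1/k)$-pEFX from $p(X_i)\le p(X_j)+k$ and the fact that every item has payment at least $1$, whereas you combine the same two inequalities directly to solve for $p(X_j)<k$. The algebra and the ingredients are identical, so this counts as the same approach.
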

\begin{proof}
    Since $i$ is pEF1 towards $j$, we have $p(X_i) \leq p(X_j) + k$.
    Suppose that $p(X_j) \geq k$.
    We have
    \begin{equation*}
        p(X_i) - 1 \leq p(X_j) + k - 1 \leq p(X_j) + (1 - \frac{1}{k}) \cdot p(X_j) = (2 - \frac{1}{k}) \cdot p(X_j).
    \end{equation*}
    In other words, $i$ is $(2 - 1/k)$-pEFX towards agent $j$, which leads to a contradiction.
\end{proof}

\subsection{The Reallocation Algorithm}

Now we are ready to introduce our algorithm.
We begin with the $\{1,k\}$-payment pEF1 equilibrium $(\bX^0, \bp)$ as described in Lemma~\ref{lemma:1_k_payment_equilibrium}.
As long as the allocation is not $(2 - 1/k)$-EFX, we find an agent $i$ that is not $(2 - 1/k)$-EFX towards an agent $j$, and reallocate some items between $i$ and $j$ (see Algorithm~\ref{alg:reallocation_bivalued}).
During these reallocations, we do not change the payment of any item.
Throughout the execution of the algorithm, we maintain the property that $(\bX, \bp)$ is an equilibrium.
Specifically, we ensure that when we reallocate an item $e$ to agent $i$, we have $e\in \MPB_i$.

Note that if $\bX^0$ is not $(2 - 1/k)$-EFX, then $(\bX^0,\bp)$ is not $(2 - 1/k)$-pEFX.
We first show that if $(\bX^0,\bp)$ is not $(2 - 1/k)$-pEFX, we have some useful properties regarding the earnings of agents.

\begin{lemma}\label{lemma:b<k}
    If the equilibrium $(\bX^0, \bp)$ is not $(2 - 1/k)$-pEFX, then we have $\min_{i\in N_L}\{ p(X^0_i) \} < k$.
\end{lemma}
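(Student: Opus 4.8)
The plan is to chase the definitions together with the two immediately preceding lemmas. Since $(\bX^0,\bp)$ is not $(2-1/k)$-pEFX, by the definition of $\beta$-pEFX there is a pair of agents $i,j\in N$ with $X^0_i\neq\emptyset$ and an item $e\in X^0_i$ such that $p(X^0_i-e) > (2-1/k)\cdot p(X^0_j)$; in particular, $i$ is not $(2-1/k)$-pEFX towards $j$. On the other hand, the equilibrium we started from is the one supplied by Lemma~\ref{lemma:1_k_payment_equilibrium}, so it is pEF1, and hence $i$ is pEF1 towards $j$.

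Next I would apply Lemma~\ref{lemma:payment<k} to this pair $(i,j)$: since $i$ is pEF1 but not $(2-1/k)$-pEFX towards $j$, it yields directly $p(X^0_j) < k$. The only remaining point worth spelling out is that this witness agent $j$ must lie in $N_L$. Indeed, if $j\in N_H$, then $|X^0_j\cap H|\geq 1$, and since every item in $H$ has payment exactly $k$ we would get $p(X^0_j)\geq k$, contradicting the bound just obtained. Hence $j\in N_L$ (in particular $N_L\neq\emptyset$, consistent with the running assumption of the section), and therefore $\min_{i\in N_L}\{p(X^0_i)\}\le p(X^0_j) < k$, which is the claim.

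I do not expect any genuine obstacle; the statement is essentially a corollary of Lemma~\ref{lemma:payment<k} and the definition of $N_H$. The only thing to be careful about is to invoke the pEF1 property of the specific starting equilibrium $(\bX^0,\bp)$ (not of an arbitrary $\{1,k\}$-payment equilibrium), so that Lemma~\ref{lemma:payment<k} is applicable, and to remember that ``$(\bX^0,\bp)$ is not $(2-1/k)$-pEFX'' is weaker than ``$\bX^0$ is not $(2-1/k)$-EFX'' — but the former is exactly what the hypothesis of the lemma gives us, so this suffices.
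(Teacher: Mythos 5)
Your proposal is correct and follows essentially the same route as the paper: extract a witness pair $(i,j)$ from the failure of $(2-1/k)$-pEFX, invoke the pEF1 property of the starting equilibrium so that Lemma~\ref{lemma:payment<k} gives $p(X^0_j)<k$, and then conclude $j\in N_L$ because any agent in $N_H$ holds an item of payment $k$. The only difference is that you spell out the $j\in N_L$ step slightly more explicitly than the paper does.
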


\begin{proof}
    Since $(\bX^0,\bp)$ is pEF1 and not $(2 - 1/k)$-pEFX, there exist $i, j\in N$ such that agent $i$ is not $(2 - 1/k)$-pEFX towards $j$.
    Following Lemma~\ref{lemma:payment<k}, we have $p(X^0_j) < k$.
    Recall that every agent in $N_H$ has at least one high payment item.
    Hence we have $j\in N_L$ and $\min_{i\in N_L}\{ p(X^0_i) \} \leq p(X^0_j) < k$.
\end{proof}

Given the above lemma and that $(\bX^0,\bp)$ is pEF1, we have the following property for $(\bX^0,\bp)$.

\begin{corollary} \label{corollary:non2pEFX_property}
    If $(\bX^0, \bp)$ is not $(2 - 1/k)$-pEFX, then there exists an integer $z < k$ such that
    \begin{itemize}
        \item for all agent $j\in N_L$, $p(X_j) \in \{z,z+1\}$;
        \item for all agent $i\in N_H$, $p(X_i) \in [k, k+z]$.
    \end{itemize}
\end{corollary}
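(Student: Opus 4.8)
The plan is to take $z$ to be the minimum earning over the agents in $N_L$ and then verify the two bullets directly, using only the pEF1 property of $(\bX^0,\bp)$ together with Lemma~\ref{lemma:b<k} and the fact that all payments lie in $\{1,k\}$.

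First I would recall that $N_L\neq\emptyset$: this was already observed before the statement, since an empty $N_L$ forces $(\bX^0,\bp)$ to be $(2-1/k)$-pEFX, contradicting the hypothesis. So define $z:=\min_{j\in N_L}p(X^0_j)$ and fix an agent $j^\star\in N_L$ attaining it, i.e.\ $p(X^0_{j^\star})=z$. I record two facts about $z$. First, $z$ is a non-negative integer: every item in $X^0_{j^\star}$ is a low-payment item by definition of $N_L$, so $p(X^0_{j^\star})=|X^0_{j^\star}|$. Second, $z<k$, which is exactly the conclusion of Lemma~\ref{lemma:b<k} under the assumption that $(\bX^0,\bp)$ is not $(2-1/k)$-pEFX.

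Next I would handle the agents in $N_L$. For any $j\in N_L$, minimality of $z$ gives $p(X^0_j)\ge z$. For the matching upper bound I invoke the pEF1 condition for the pair $(j,j^\star)$: if $X^0_j=\emptyset$ then $z=0$ and $p(X^0_j)=0\in\{z,z+1\}$; otherwise there is $e\in X^0_j$ with $p(X^0_j-e)\le p(X^0_{j^\star})=z$, and since $e\in L$ we have $p(e)=1$, so $p(X^0_j)\le z+1$. Hence $p(X^0_j)\in\{z,z+1\}$. Then for the agents in $N_H$: any $i\in N_H$ holds at least one payment-$k$ item, so $X^0_i\neq\emptyset$ and $p(X^0_i)\ge k$; applying pEF1 for the pair $(i,j^\star)$ yields $e\in X^0_i$ with $p(X^0_i-e)\le p(X^0_{j^\star})=z$, and since $p(e)\le k$ this gives $p(X^0_i)\le z+k$, i.e.\ $p(X^0_i)\in[k,k+z]$.

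I do not expect a genuine obstacle here. The only points requiring a little care are the degenerate case in which some bundle of an agent in $N_L$ is empty (harmless, since it merely pins $z=0$) and the observation that $z$ is an integer, which holds precisely because agents in $N_L$ receive only payment-$1$ items; both are immediate from the definitions. Everything else is a one-line consequence of the pEF1 inequality and $p(e)\in\{1,k\}$.
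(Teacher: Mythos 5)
Your proposal is correct and matches the argument the paper intends (the corollary is stated without an explicit proof, but the derivation is exactly this): take $z=\min_{j\in N_L}p(X^0_j)$, which is an integer less than $k$ by Lemma~\ref{lemma:b<k}, and read off both bullets from the pEF1 inequality together with $p(e)\in\{1,k\}$. Your handling of the empty-bundle case and the integrality of $z$ is careful but routine; nothing further is needed.
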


Corollary~\ref{corollary:non2pEFX_property} highlights that our starting point, the pEF1 equilibrium, possesses several useful properties, e.g., agents in $N_L$ are pEFX towards each other and agents in $N_H$ are $(2 - 1/k)$-pEFX towards each other.
Therefore, to compute an allocation that is $(2 - 1/k)$-EFX and PO, it suffices to eliminate the strong envy from agents in $N_H$ to agents in $N_L$.

\paragraph{High Level Idea.}
The most natural idea is to reallocate items from agents in $N_H$ to those in $N_L$, until the allocation is $(2-1/k)$-pEFX (which implies $(2-{1}/{k})$-EFX and PO, by Lemma~\ref{lemma:bpEFX_implies_bEFX}).
However, such reallocation must be ``MPB-feasible'', i.e., whenever we allocate an item $e$ to agent $i$, we need to ensure that $e\in \MPB_i$, as otherwise the equilibrium property ceases to hold.
As we have shown in Lemma~\ref{lemma:1k_equilibrium_properties}, all high payment items are MPB to agents in $N_L$.
However, reallocating these items does not help achieving $(2-1/k)$-pEFX.
On the other hand, there might not be sufficient low payment items from $N_H$ that can be reallocated to $N_L$, and it is difficult to maintain $(2-1/k)$-pEFX between agents in $N_L$.
To get around these difficulties, we instead focus on ensuring $(2 - 1/k)$-EFX, while guaranteeing that all reallocations are MPB-feasible (which maintains the equilibrium and the PO property).
We establish a useful property in Lemma~\ref{lemma:property_of_not2EFX} that if agent $i$ is not $(2-1/k)$-EFX towards agent $j$, then we have $X_j \subseteq \MPB_i$.
That means, while we cannot guarantee the MPB-feasibility for allocating the low payment items from $X_i$ to $X_j$, we can instead \emph{exchange} all low payment items in $X_j$ with the high payment item in $X_i$, which preserves the equilibrium property and balances the earning between agents $i$ and $j$.
Note that throughout the whole execution of the algorithm, we do not change the payment of any item.

\medskip

In the following, we say that agent $i$ \emph{strongly envies} agent $j$ if $i$ is not $(2 - 1/k)$-EFX towards $j$.

\begin{algorithm}[htbp]
\caption{$(2 - 1/k)$-EFX and PO allocations for $\{1,k\}$-instances}
\label{alg:reallocation_bivalued}
\KwIn{A pEF1 $\{1,k\}$-payment equilibrium $(\bX, \bp)$ for $\{1,k\}$-instances}
\While{exists agent $i$ that is not $(2 - 1/k)$-EFX towards agent $j$}{
    $j \leftarrow \argmin\{ p(X_{j'}): i \text{ strongly envies } j' \}$; \qquad\ \ \tcp{$i\in N_H$ and $j\in N_L$, by Lemma~\ref{lemma:property_of_not2EFX}}
    let $e \in X_i \cap H$ be the high payment item in $X_i$\;
    update $X_i \leftarrow X_i \cup X_j - e$; \qquad\quad\qquad\qquad\qquad\ \tcp{$X_j\in \MPB_i$, by Lemma~\ref{lemma:property_of_not2EFX}}
    update $X_j \leftarrow \{e\}$; \quad\qquad\qquad\qquad\qquad\qquad\qquad \tcp{$e\in \MPB_j$, by Lemma~\ref{lemma:1k_equilibrium_properties}}
    $N_H \leftarrow N_H \setminus \{i\} \cup \{j\}$\;
    $N_L \leftarrow N_L \setminus \{j\} \cup \{i\}$\;
    }
\KwOut{Allocation $\bX$}
\end{algorithm}

We use $\bX^t$ to denote the allocation at the beginning of round $t$. During the item reallocations, we may reallocate high-payment items, and thus agents may shift between $N_L$ and $N_H$.
We use $N^t_H$ and $N^t_L$ to denote the set of agents with and without high payment items at the beginning of round $t$, respectively.
Specifically, $N^0_L, N^0_H$ denote the responding agent sets in the initial equilibrium $(\bX^0, \bp)$.
As an illustration, we give an example of the execution of Algorithm~\ref{alg:reallocation_bivalued} in Example~\ref{example:execution_of_reallocation}.

\begin{example}\label{example:execution_of_reallocation}
    Consider the following $\{1, k\}$-instance with seven agents and $k = 6$.
    At the beginning of round $t$, $p(X_1^{t}) = 2$, $p(X_2^{t}) = 2$, $p(X_3^{t}) = 3$, $p(X_4^{t}) = 6$, $p(X_5^{t}) = 7$, $p(X_6^{t}) = 8$, and $p(X_7^{t}) = 8$. In this instance, agents $1$, $2$, and $3$ are in group $N_L^{t}$ while agents $4$, $5$, $6$, and $7$ are in group $N_H^t$. Agent $7$ is not $(2 - 1/k)$-EFX towards some agents, with agent $1$ having the minimum earning among them. Then, following Algorithm \ref{alg:reallocation_bivalued}, agent $1$ receives $X_7^{t} \cap H$ and agent $7$ receives all items in $X_1^{t}$. See the illustration of the example in Figure~\ref{figure:exchange-H-L}.
\end{example}

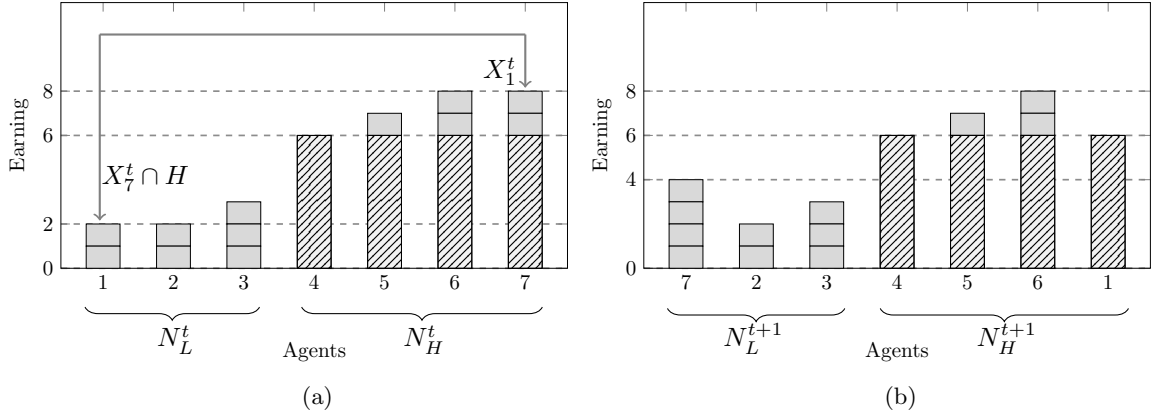
\begin{figure}[htb]
    \centering
    \begin{subfigure}[c]{0.45\textwidth} 
        \centering
        \begin{tikzpicture}[scale=0.8]
        \begin{axis}[
            ybar stacked,
            ymax=12,
            ymin=0,
            enlarge x limits=0.1,
            enlarge y limits=false,
            bar width=16pt,
            height=6cm, width=10cm,
            ylabel={Earning},
            ytick={0,2,6,8},
            xlabel={Agents},
            xlabel style={at={(0.5, -0.25)}, anchor=north},
            symbolic x coords={1, 2, 3, 4, 5, 6, 7}, 
            xtick=data,
            xticklabels={1, 2, 3, 4, 5, 6, 7},
            ymajorgrids=true,
            grid style={dashed, line width=0.8pt, gray!90},
        ]
        \addplot+[draw=black,postaction={pattern=north east lines}, fill=gray!10] coordinates {
            (1,0) (2,0) (3,0) (4,6) (5,6) (6,6) (7,6)
        };
        \addplot+[draw=black, fill=gray!30] coordinates {
            (1,1) (2,1) (3,1) (4,0) (5,0) (6,0) (7,0)
        }; 
        \addplot+[draw=black, fill=gray!30] coordinates {
            (1,1) (2,1) (3,1) (4,0) (5,0) (6,0) (7,0)
        }; 
        \addplot+[draw=black, fill=gray!30] coordinates {
            (1,0) (2,0) (3,1) (4,0) (5,0) (6,0) (7,0)
        }; 
        \addplot+[draw=black, fill=gray!30] coordinates {
            (1,0) (2,0) (3,0) (4,0) (5,1) (6,1) (7,1)
        };
        \addplot+[draw=black, fill=gray!30] coordinates {
            (1,0) (2,0) (3,0) (4,0) (5,0) (6,1) (7,1)
        };
        \end{axis}
        
        \node[anchor=north] at (rel axis cs:0.31, -0.18) {\(N_L^{t}\)};
        \node[anchor=north] at (rel axis cs:0.80, -0.18) {\(N_H^{t}\)};
        
        \draw[<-, thick, gray] (rel axis cs:0.16,0.18) -- (rel axis cs:0.16,0.88);
        \draw[-, thick, gray] (rel axis cs:0.16,0.88) -- (rel axis cs:1.0,0.88);
        \draw[->, thick, gray] (rel axis cs:1.0,0.88) -- (rel axis cs:1.0,0.68);
        \node[anchor=north] at (rel axis cs:0.95,0.83) {\(X_1^{t}\)};
        \node[anchor=north] at (rel axis cs:0.25, 0.43) {\(X_7^{t} \cap H\)};
        
        \draw [decorate,decoration={brace,amplitude=5pt,mirror,raise=1pt},yshift=0pt]
        (0.4,-0.6) -- (3.3,-0.6);
        \draw [decorate,decoration={brace,amplitude=5pt,mirror,raise=1pt},yshift=0pt]
        (4.0,-0.6) -- (8.0,-0.6);
        \end{tikzpicture}
        \caption{\hspace{-1.47cm}     }
        \end{subfigure}
        \hspace{0.5cm}
    \begin{subfigure}[c]{0.45\textwidth} 
        \centering
        \begin{tikzpicture}[scale=0.8]
        \begin{axis}[
            anchor=north,
            yshift=-3cm,
            ybar stacked,
            ymax=12,
            ymin=0,
            enlarge x limits=0.1,
            enlarge y limits=false,
            bar width=16pt,
            height=6cm,
            width=10cm,
            ylabel={Earning},
            ytick={0, 4, 6, 8},
            xlabel={Agents},
            xlabel style={at={(0.5, -0.25)}, anchor=north},
            symbolic x coords={1,2,3,4,5,6,7},
            xtick=data,
            xticklabels={7, 2, 3, 4, 5, 6, 1},
            ymajorgrids=true,
            grid style={dashed, line width=0.8pt, gray!90}
        ]
        \addplot+[draw=black, fill=gray!30]coordinates {
            (1,1) (2,1) (3,1) (4,0) (5,0) (6,0) (7,0)
        };
        \addplot+[draw=black, fill=gray!30]coordinates {
            (1,1) (2,1) (3,1) (4,0) (5,0) (6,0) (7,0)
        };
        \addplot+[draw=black, fill=gray!30]coordinates {
            (1,1) (2,0) (3,1) (4,0) (5,0) (6,0) (7,0)
        };
        \addplot+[draw=black, fill=gray!30]coordinates {
            (1,1) (2,0) (3,0) (4,0) (5,0) (6,0) (7,0)
        };
        \addplot+[draw=black,postaction={pattern=north east lines}, fill=gray!10]coordinates {
            (1,0) (2,0) (3,0) (4,6) (5,6) (6,6) (7,6)
        };
        \addplot+[draw=black, fill=gray!30]coordinates {
            (1,0) (2,0) (3,0) (4,0) (5,1) (6,1) (7,0)
        };
        \addplot+[draw=black, fill=gray!30]coordinates {
            (1,0) (2,0) (3,0) (4,0) (5,0) (6,1) (7,0)
        };
        \end{axis}
        
        \node[anchor=north] at (rel axis cs:-0.20,-1.85) {\(N_L^{t + 1}\)};
        \node[anchor=north] at (rel axis cs:0.30,-1.85) {\(N_H^{t + 1}\)};
        
        \draw [decorate,decoration={brace,amplitude=5pt,mirror,raise=1pt},yshift=0pt]
        (rel axis cs: -0.37,-1.82) -- (rel axis cs: -0.02,-1.82);
        \draw [decorate,decoration={brace,amplitude=5pt,mirror,raise=1pt},yshift=0pt]
        (rel axis cs: 0.05,-1.82) -- (rel axis cs: 0.54,-1.82);
        \end{tikzpicture}
        \caption{\hspace{-1.47cm}  }
    \end{subfigure}
    \caption[c]{The illustration of Example~\ref{example:execution_of_reallocation}. ($a$) The earning status of agents at the beginning of round $t$. ($b$) The earning status of agents after the swap operation.}
    \label{figure:exchange-H-L}
\end{figure}

\subsection{Invariants and Analysis of Algorithm}

Note that we can use $\bX^{t + 1}$, $N_H^{t + 1}$, $N_L^{t + 1}$ to denote the allocation, the set of agents with and without high payment items at the end of round $t$, respectively.
Before analyzing the algorithm, we first introduce several invariants that will be useful for our subsequent analysis. In our algorithm, we ensure that the equilibrium property is never violated, i.e., $(\bX^t, \bp)$ is an equilibrium for all $t$, which guarantees that the final allocation is PO. 

\begin{invariant}[Equilibrium Invariant] \label{invariant:1kequilibrium}
    For all $t\geq 0$, $(\bX^t, \bp)$ is a $\{1,k\}$-payment equilibrium.
\end{invariant}

In addition, our algorithm maintains the invariant that agents within the same group are always $(2 - 1/k)$-EFX towards each other. This means that we can focus solely on eliminating envy between different groups. Furthermore, their earnings are always bounded within the specified ranges.

\begin{invariant}[Low Group Invariant] \label{invariant:2EFX_for_N_L}
    For all $t\geq 0$, agents in $N^t_L$ are $(2 - 1/k)$-EFX towards each other.
    Moreover, for all $j\in N_L$, we have $p(X^t_j) \in [z, k+z]$.
\end{invariant}

\begin{invariant}[High Group Invariant] \label{invariant:2pEFX_for_N_H}
    For all $t\geq 0$, $i \in N_H^t$, we have $p(X^t_i) \in [k, k+z]$.
\end{invariant}

Note that Invariant~\ref{invariant:2pEFX_for_N_H} implies that the allocation is pEF1 for agents in $N_H^t$.
Moreover, agents in $N_H^t$ are $(2 - 1/k)$-pEFX towards each other (which is stronger than being $(2 - 1/k)$-EFX), because for any $i,j\in N_H^t$ we have (recall that $z<k$)
\begin{equation*}
    p(X^{t}_i) - 1 \leq k + z - 1 < 2k - 1 \leq (2 - \frac{1}{k})\cdot p(X^{t}_j).
\end{equation*}


In the following, we show that if the invariants are maintained (for $(\bX^{t},\bp)$, $N^{t}_L$ and $N^{t}_H$) when round $t$ begins, then they remain to hold when round $t$ ends.
Particularly, all invariants hold when round $1$ begins, for $(\bX^0,\bp)$, $N^0_L$ and $N^0_H$.

\begin{lemma} \label{lemma:property_of_not2EFX}
    If there exists an agent $i$ that is not $(2 - 1/k)$-EFX towards another agent $j$ in round $t$, then we have $i\in N^{t}_H$, $j\in N^{t}_L$, $\alpha_i = 1$, and $X^{t}_j \subseteq \MPB_i$.
\end{lemma}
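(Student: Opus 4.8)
The plan is to establish the four conclusions in a natural order, starting from the structural facts we already know about $\{1,k\}$-payment equilibria and the pEF1 invariant that is assumed to hold at the start of round $t$. First I would argue that $j \in N^t_L$. Suppose instead $j \in N^t_H$; then by Invariant~\ref{invariant:2pEFX_for_N_H} we have $p(X^t_j) \geq k$, and since the pEF1 invariant holds for agents in $N^t_H$ (Invariant~\ref{invariant:2pEFX_for_N_H}, as noted after it), agent $i$ is pEF1 towards $j$. By Lemma~\ref{lemma:payment<k}, an agent that is pEF1 but not $(2-1/k)$-pEFX towards $j$ forces $p(X^t_j) < k$, a contradiction; and by Corollary~\ref{corollary:notbpEFX_implies_notbpEFX}, $i$ not being $(2-1/k)$-EFX towards $j$ means $i$ is not $(2-1/k)$-pEFX towards $j$, so the hypothesis of Lemma~\ref{lemma:payment<k} is met. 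Hence $j \in N^t_L$.

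Next I would show $i \in N^t_H$. If $i \in N^t_L$, then both $i$ and $j$ lie in $N^t_L$, and by Invariant~\ref{invariant:2EFX_for_N_L} they are $(2-1/k)$-EFX towards each other, contradicting the assumption that $i$ is not $(2-1/k)$-EFX towards $j$. So $i \in N^t_H$. For $\alpha_i = 1$: by Invariant~\ref{invariant:1kequilibrium}, $(\bX^t,\bp)$ is a $\{1,k\}$-payment equilibrium, so Lemma~\ref{lemma:1k_equilibrium_properties}(1) gives $\alpha_i \in \{1, 1/k\}$. I would rule out $\alpha_i = 1/k$ using the remark after Lemma~\ref{lemma:1k_equilibrium_properties}: if $\alpha_i = 1/k$ then every item in $X^t_i$ has cost $1$ and payment $k$, and no low-payment item is MPB for $i$. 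The remaining step is to show that such an $i$ cannot strongly envy anyone; intuitively, $c_i(X^t_i) = |X^t_i|$ while $p(X^t_i) = k|X^t_i|$, and I would need to compare this against $c_i(X^t_j)$ and derive that $i$ is $(2-1/k)$-EFX. This requires combining the pEF1 bound $p(X^t_i) \le p(X^t_j) + k$ with $p(X^t_j) \ge z$ (Invariant~\ref{invariant:2EFX_for_N_L}) and the cost relations; I expect this to be the technically fussiest part, since one must be careful about how costs scale relative to payments when $\alpha_i = 1/k$.

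Finally, for $X^t_j \subseteq \MPB_i$: having established $\alpha_i = 1$, I would argue by contradiction. Suppose some $e \in X^t_j$ is not MPB for $i$, i.e. $\alpha_{i,e} > \alpha_i = 1$, which forces $\alpha_{i,e} = k$, meaning $c_i(e) = k$ and $p(e) = 1$ (so $e \in L$, consistent with $j \in N^t_L$). The idea is that replacing the high-payment item $e' \in X^t_i \cap H$ in $i$'s bundle by such a cheap-for-$i$ arrangement should already make $i$ not strongly envy $j$ — more precisely, I would compare $c_i(X^t_i - e')$ against $c_i(X^t_j)$ and show the strong-envy inequality fails. The key leverage is that $X^t_j \subseteq L$ with at least one non-MPB item makes $c_i(X^t_j)$ large relative to $p(X^t_j)$, while the pEF1 condition keeps $p(X^t_i)$ from exceeding $p(X^t_j)+k$, and $\alpha_i = 1$ converts payments to costs for the MPB items in $X^t_i$. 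I anticipate the main obstacle is exactly this last quantitative step: assembling the right chain of inequalities relating $c_i(X^t_i - e')$, $p(X^t_i)$, $p(X^t_j)$, and $c_i(X^t_j)$ to contradict $c_i(X^t_i - e') > (2-1/k)\,c_i(X^t_j)$, using that $X^t_i$ contains exactly one high-payment item (by the pEF1 invariant together with Lemma~\ref{lemma:size-of-lowpaymentitems}, each agent in $N^t_H$ holds at most one high-payment item, which the algorithm's structure preserves).
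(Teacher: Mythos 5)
Your overall decomposition mirrors the paper's (group membership first, then $\alpha_i=1$, then $X^t_j\subseteq\MPB_i$), but as written the argument has two genuine gaps. First, in your opening step you rule out $j\in N^t_H$ by invoking Lemma~\ref{lemma:payment<k}, whose hypothesis is that $i$ is pEF1 towards $j$. The remark after Invariant~\ref{invariant:2pEFX_for_N_H} only guarantees that agents \emph{in} $N^t_H$ are pEF1 towards everyone; at this point you have not yet placed $i$ in $N^t_H$, and if $i\in N^t_L$ it need not be pEF1 towards $j$ (e.g.\ $p(X^t_i)=k+z$ with all low-payment items and $p(X^t_j)=k$ violates pEF1 once $z>1$). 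The case $i\in N^t_L$, $j\in N^t_H$ has to be excluded directly from the payment bounds of Invariants~\ref{invariant:2EFX_for_N_L} and~\ref{invariant:2pEFX_for_N_H}: for any $e\in X^t_i$ one has $p(X^t_i-e)\le p(X^t_i)-1\le k+z-1<2k-1\le(2-\frac{1}{k})\cdot p(X^t_j)$, so $i$ is $(2-1/k)$-pEFX (hence EFX) towards $j$ with no appeal to pEF1.

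Second, the two quantitative steps that you explicitly defer (``the technically fussiest part'', ``the main obstacle'') are precisely where the content of the lemma lies, and they are not optional bookkeeping. Both close quickly once the right observation is used. For ruling out $\alpha_i=1/k$: every item of $X^t_i$ would then have payment $k$, and $p(X^t_i)\le k+z<2k$ forces $|X^t_i|=1$, so $X^t_i-e=\emptyset$ and $i$ is vacuously EFX towards everyone; no comparison with $c_i(X^t_j)$ is needed. For $X^t_j\subseteq\MPB_i$: if some $e\in X^t_j$ had $c_i(e)=k$, then $c_i(X^t_j)\ge k$ and $c_i(X^t_i)-1=\alpha_i\cdot p(X^t_i)-1\le k+z-1<2k-1\le(2-\frac{1}{k})\cdot c_i(X^t_j)$, contradicting strong envy; hence $c_i(e)=p(e)=1$ for all $e\in X^t_j$ and $\alpha_{i,e}=1=\alpha_i$. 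Until these chains are written down the proposal is a plan rather than a proof. (Also, the ``at most one high-payment item'' fact you attribute to Lemma~\ref{lemma:size-of-lowpaymentitems} follows simply from $p(X^t_i)\le k+z<2k$.)
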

\begin{proof}
    By Invariant~\ref{invariant:2EFX_for_N_L} and \ref{invariant:2pEFX_for_N_H}, if $i$ is not $(2 - 1/k)$-EFX towards $j$, then they must be from different groups.
    Moreover, if $i \in N^{t}_L$ and $j \in N^{t}_H$, then we have 
    $$p(X_i^{t}) - 1 \leq k + z - 1 < 2k - 1 \leq (2 - \frac{1}{k}) \cdot p(X_j^{t}), 
    $$
    where the first inequality holds since $p(X_i^{t}) \in [z, k + z]$ by Invariant \ref{invariant:2EFX_for_N_L} and the last inequality holds since $p(X_j^{t}) \in [k, k + z]$ by Invariant \ref{invariant:2pEFX_for_N_H}. Therefore we must have $i\in N^{t}_H$ and $j\in N^{t}_L$.
    
    Next we show that $X^{t}_j \subseteq \MPB_i$.
    
    First, observe that we must have $\alpha_i = 1$, as otherwise $\alpha_i = 1/k$ and we have $c_i(e) = 1$ and $p(e) = k$ for all $e\in X^{t}_i$.
    It implies that $|X^{t}_i|=1$, which is a contradiction with $i$ being not $(2 - 1/k)$-EFX towards $j$.
    Second, we show that $c_i(e) = 1$ for all $e\in X^{t}_j$.
    Suppose otherwise, then we have
    \begin{equation*}
        c_i(X^{t}_i) - 1 = \alpha_i \cdot p(X_i^{t}) - 1 \leq k + z - 1 < 2k - 1 \leq (2 - \frac{1}{k})\cdot c_i(X^{t}_j),
    \end{equation*}
    where the last inequality holds from the fact that $c_i(X_j^{t}) \geq k$. It's a contradiction with $i$ being not $(2 - 1/k)$-EFX towards $j$.
    Therefore we have $\alpha_{i,e} = 1$ for all $e\in X^{t}_j$, which implies $X^{t}_j \subseteq \MPB_i$.
\end{proof}

\begin{lemma} \label{lemma:invariants_hold}
    Invariants~\ref{invariant:1kequilibrium}, \ref{invariant:2EFX_for_N_L} and \ref{invariant:2pEFX_for_N_H} are maintained at the end of round $t$.
\end{lemma}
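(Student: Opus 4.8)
The plan is to verify the three invariants one at a time for the allocation $(\bX^{t+1}, \bp)$, $N^{t+1}_L$, $N^{t+1}_H$ produced at the end of round $t$, assuming they hold at the beginning of round $t$. By Lemma~\ref{lemma:property_of_not2EFX}, the agent $i$ that strongly envies $j$ satisfies $i\in N^t_H$, $j\in N^t_L$, $\alpha_i=1$, and $X^t_j\subseteq \MPB_i$; the algorithm moves the (unique) high payment item $e\in X^t_i\cap H$ to $j$ and gives all of $X^t_j$ to $i$. Recall that $j$ is chosen to minimize $p(X^t_{j'})$ over agents $j'$ strongly envied by $i$.

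First I would check the Equilibrium Invariant~\ref{invariant:1kequilibrium}. The payments are untouched, so $p(e)\in\{1,k\}$ still holds. Only the bundles of $i$ and $j$ change. Agent $i$ now holds $X^t_i\setminus\{e\}$ together with $X^t_j$; the first part was MPB to $i$ by the equilibrium at the start of round $t$, and $X^t_j\subseteq\MPB_i$ by Lemma~\ref{lemma:property_of_not2EFX}, so $X^{t+1}_i\subseteq\MPB_i$. Agent $j$ now holds exactly $\{e\}$; since $j\in N^t_L\subseteq N_L$ and $e\in H$, Lemma~\ref{lemma:1k_equilibrium_properties}(3) gives $e\in\MPB_j$. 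All other bundles are unchanged, so $(\bX^{t+1},\bp)$ is a $\{1,k\}$-payment equilibrium.

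Next I would handle the earning bounds, which drive both the High and Low Group Invariants. The key computation is the new earning of $i$: we have $p(X^{t+1}_i) = p(X^t_i) - k + p(X^t_j)$. Since $i$ is pEF1 towards $j$ (Invariant~\ref{invariant:2pEFX_for_N_H} gives pEF1 for $N^t_H$ agents) and holds a high payment item, $p(X^t_i) - k \le p(X^t_j)$, so $p(X^{t+1}_i) \le 2p(X^t_j)$; and because $i$ strongly envies $j$, $p(X^t_i)$ is large while $p(X^t_j)$ is small, which I expect to force $p(X^{t+1}_i) < k$, placing $i$ into $N^{t+1}_L$ with earning below $k$ (hence within $[z,k+z]$ after also checking a lower bound $p(X^{t+1}_i)\ge z$). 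Agent $j$'s new earning is exactly $k\in[k,k+z]$, so it satisfies both the $N_H$ range and the $N_L$ range. All other agents keep their old earnings, which already lie in the required intervals. This verifies Invariant~\ref{invariant:2pEFX_for_N_H} immediately (every agent now in $N^{t+1}_H$ either was there before with earning in $[k,k+z]$, or is $j$ with earning $k$), and gives the earning half of Invariant~\ref{invariant:2EFX_for_N_L}.

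Finally, for the $(2-1/k)$-EFX clause of the Low Group Invariant~\ref{invariant:2EFX_for_N_L}, I would argue that any two agents in $N^{t+1}_L$ are $(2-1/k)$-EFX towards each other. The delicate pair is $i$ (the newcomer) and some $j'\in N^{t+1}_L$: towards the old low-group agents I would use the earning bounds together with the fact that $i$ now has no high payment item so removing any single item drops its payment by at most $k$, comparing against the lower bound $z$ on $j'$'s earning. The main obstacle I anticipate is precisely here — showing $i$ does not strongly envy any remaining low-group agent $j'$. This is where the minimality in the choice of $j$ matters: before the swap $i$ did not strongly envy any $j'$ with $p(X^t_{j'}) > p(X^t_j)$, and $j'$'s earning is unchanged, while $i$'s earning only decreased; one must convert ``$i$ not strongly envious under the old bundle'' into ``$i$ not strongly envious under the new smaller bundle'' using $c_i(X^{t+1}_i)=\alpha_i\cdot p(X^{t+1}_i)$ with $\alpha_i=1$ and $c_i(X^t_{j'})\ge \alpha_i\cdot p(X^t_{j'})$. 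I would also separately confirm that agents in $N^{t+1}_L$ other than $i$ remain $(2-1/k)$-EFX towards each other and towards $i$, which follows directly from the unchanged bundles and the earning bounds. Assembling these pieces yields all three invariants at the end of round $t$.
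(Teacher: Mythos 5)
Your treatment of the Equilibrium Invariant and the High Group Invariant matches the paper's, and the earning-range part of the Low Group Invariant is essentially right (though your expectation that $p(X^{t+1}_i)<k$ is not what holds or what is needed: the correct bound is $p(X^{t+1}_i)<z+k$, obtained from $p(X^t_i)\le k+z$ and $p(X^t_j)<k$, and $i$ joins $N^{t+1}_L$ simply because it gives away its unique high payment item, not because of any earning threshold). The genuine gap is in the $(2-1/k)$-EFX clause for the newcomer $i$ against the remaining agents $l\in N^{t+1}_L$. You assert that ``before the swap $i$ did not strongly envy any $j'$ with $p(X^t_{j'})>p(X^t_j)$,'' but minimality gives exactly the opposite reading: $j$ is the \emph{minimum-earning} agent among those $i$ strongly envies, so $i$ may well strongly envy several other low-group agents $l$ with $p(X^t_l)\ge p(X^t_j)$. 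Your plan only covers the agents $i$ did \emph{not} strongly envy before (where the ``cost did not increase'' argument works), and silently drops the harder case.

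For the agents $l$ that $i$ \emph{did} strongly envy at the start of round $t$, one cannot argue by monotonicity of $i$'s bundle cost alone; the paper instead bounds the new earning directly. It uses Lemma~\ref{lemma:payment<k} to get $p(X^t_l)<k$, the minimality of $j$ to get $p(X^t_j)\le p(X^t_l)$, and crucially Lemma~\ref{lemma:size-of-lowpaymentitems} (which your proposal never invokes) to get $p(X^t_i\cap L)\le p(X^t_j)$, yielding
\begin{equation*}
p(X^{t+1}_i)-1 = p(X^t_i\cap L)+p(X^t_j)-1 \le 2\,p(X^t_l)-1 < \Bigl(2-\tfrac{1}{k}\Bigr)\cdot p(X^t_l),
\end{equation*}
where the last step needs $p(X^t_l)<k$. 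Without this case your proof of Invariant~\ref{invariant:2EFX_for_N_L} does not go through, since nothing you wrote rules out $i$ still strongly envying some $l\ne j$ after the exchange.
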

\begin{proof}
    In round $t$, we identify two agents $i\in N^t_H$, $j \in N^t_L$, and exchange some items.
    By Lemma~\ref{lemma:property_of_not2EFX} and~\ref{lemma:1k_equilibrium_properties}, we can guarantee that both agents receive items that are in their MPB set.
    Hence $(\bX^t, \bp)$ is an equilibrium.    
    As our algorithm does not change the payment of any item, the equilibrium remains $\{1,k\}$-payment.
    Hence Invariant~\ref{invariant:1kequilibrium} is maintained.

    Next, we show that Invariant~\ref{invariant:2EFX_for_N_L} is maintained.
    We remark that there is only one high payment item in $X^t_i$ since $p(X^t_i) \leq k+z < 2k$, which implies that agent $i$ only holds low payment items at the end of round $t$.
    Since in round $t$ only $i$ and $j$ exchange items, it suffices to show that at the end of round $t$, (1) agent $i$ (who joins $N^{t + 1}_L$) is $(2 - 1/k)$-EFX towards all other agents in $N^{t + 1}_L$; (2) all other agents in $N^{t + 1}_L\setminus \{i\}$ is $(2 - 1/k)$-EFX towards $i$; (3) $p(X_i^{t + 1})\in [z,k+z]$.

    \begin{enumerate}
        \item[(1)] For any agent $l\in N_L^t \setminus\{i\}$, if agent $i$ is $(2 - 1/k)$-EFX towards agent $l$ in allocation $\bX^{t}$, then we have $c_i(X^{t}_i) - 1 \leq (2 - 1/k) \cdot c_i(X^{t}_l)$. Since $\alpha_i = 1$ and $X^{t}_j \subseteq \MPB_i$ by Lemma \ref{lemma:property_of_not2EFX}, we have $c_i(X_j^{t}) = \alpha_i \cdot p(X_j^{t}) < k$ by Lemma~\ref{lemma:payment<k}.
        Hence after exchanging a high payment item with $X^t_j$, the bundle cost of agent $i$ does not increase, and therefore agent $i$ is still $(2 - 1/k)$-EFX towards $l$ (whose bundle does not change) at the end of round $t$.
        
        Now suppose that agent $i$ is not $(2 - 1/k)$-EFX towards agent $l$ in allocation $\bX^{t}$.
        By Lemma \ref{lemma:payment<k} we have $p(X_l^{t}) < k$.
        Recall that $j$ is the agent with the minimum earning such that $i$ is not $(2 - 1/k)$-EFX towards. 
        Hence $p(X^{t}_j) \leq p(X^{t}_l) = p(X^{t + 1}_l)$.
        Following Lemma~\ref{lemma:size-of-lowpaymentitems}, we have 
        \begin{equation*}
            p(X^{t + 1}_i) - 1 = p(X^{t}_i \cap L) + p(X^{t}_j) -1 \leq 2 \cdot p(X^{t}_j) - 1 \leq 2 \cdot p(X^{t + 1}_l) - 1 < (2 - \frac{1}{k}) \cdot p(X^{t + 1}_l),
        \end{equation*}
        where the first inequality holds since the allocation $\bX^t$ is pEF1 for agent $i$ and the last inequality holds from the fact that $p(X_l^{t + 1}) = p(X_l^{t}) < k$.
        
        
        \item[(2)] Following Invariant~\ref{invariant:2EFX_for_N_L}, any agent $l\in N_L^{t + 1} \setminus\{i\}$ is $(2 - 1/k)$-EFX towards agent $j$ in the allocation $\bX^{t}$.
        Since $X^{t}_j \subseteq X^{t + 1}_i$, agent $l$ is also $(2 - 1/k)$-EFX towards $i$ at the end of round $t$.
        
        \item[(3)] At the end of round $t$ we have 
        \begin{equation*}
            p(X^{t + 1}_i) = p(X^{t}_i \setminus H) + p(X^{t}_j) \leq (k + z - k) + p(X^{t}_j) < z + k,
        \end{equation*}
        where the first inequality holds following $p(X^{t}_i) \leq k+z$ and $|X_i^{t} \cap H| = 1$, the second inequality holds from the fact that $p(X^{t}_j) < k$ by Lemma~\ref{lemma:payment<k}. On the other hand, we have $p(X^{t + 1}_i) \geq p(X^{t}_j) \geq z$. In conclusion, $p(X_i^{t + 1}) \in [z, k + z]$.
    \end{enumerate}
    \smallskip

    Finally, we show that Invariant~\ref{invariant:2pEFX_for_N_H} is maintained. 
    It's clear that at the end of round $t$, $X_j^{t + 1}$ only contains an item with high payment, i.e., $p(X_j^{t + 1}) = k \in [k,k+z]$.
    For any $l \in N_H^{t + 1} \setminus \{j\}$, we have $p(X_{l}^{t}) = p(X_{l}^{t + 1})$, which implies that Invariant~\ref{invariant:2pEFX_for_N_H} is maintained at the end of round $t$.
    %
\end{proof}

Next, we will analyze the time complexity of the algorithm and demonstrate that it terminates in polynomial time.

\begin{lemma}
    The algorithm computes a $(2 - 1/k)$-EFX and PO allocation in $O(n^2m)$ time.
\end{lemma}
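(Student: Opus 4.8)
The plan is to show two things: first, that the invariants established in Lemma~\ref{lemma:invariants_hold} guarantee that when the while-loop of Algorithm~\ref{alg:reallocation_bivalued} terminates the output allocation is $(2-1/k)$-EFX and PO; second, that the loop runs for at most $O(n^2)$ rounds, each round costing $O(m)$ time. Correctness of the output is immediate: by Invariant~\ref{invariant:1kequilibrium}, $(\bX^t,\bp)$ is always an equilibrium, so by the First Welfare Theorem (the unnamed Lemma in the preliminaries) the final allocation is PO; and the loop exits only when no agent strongly envies another, i.e.\ the allocation is $(2-1/k)$-EFX by definition. So the whole content of the lemma is the round bound.

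For the round bound, the key is to track how agents move between $N_L$ and $N_H$. In each round we pick $i\in N_H^t$ and $j\in N_L^t$ (by Lemma~\ref{lemma:property_of_not2EFX}), agent $i$ moves to $N_L^{t+1}$ with a bundle of only low-payment items, and agent $j$ moves to $N_H^{t+1}$ holding exactly one high-payment item (so $p(X_j^{t+1})=k$). The plan is to argue a monotone progress measure. The natural candidate is that an agent who enters $N_L$ from $N_H$ never strongly envies anyone again, or more precisely never triggers another reallocation as the envious agent; combined with the fact that $|N_H|$ stays the same size (one agent in, one agent out), one bounds the total number of rounds. I would formalize this by showing: (a) once agent $i$ joins $N_L$ in round $t$, its bundle satisfies $p(X_i^{t+1})\le 2p(X_j^t)-? $ tightly enough that it is $(2-1/k)$-EFX towards every agent with earning at least $p(X_j^t)$, and all agents in $N_H$ have earning $\ge k > p(X_j^t)$; (b) the earnings of agents already in $N_L$ that do not participate are unchanged, and the newly arrived agent in $N_H$ has the minimum possible high-group earning $k$; (c) hence the "problematic" agent in each subsequent round must be one of the original $N_H$ agents or a later arrival, and each original $N_H$ agent can participate as the envious party only a bounded number of times. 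The cleanest route, matching the paper's remark in the introduction that "every agent with earning $z$ will participate in item reallocation at most once" (an analogue of which should hold here), is to show each agent participates as the $N_H$-side agent $O(n)$ times and as the $N_L$-side agent $O(n)$ times, giving $O(n^2)$ rounds total.

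The main obstacle is pinning down exactly why the process cannot cycle — i.e.\ ruling out an agent oscillating between the two groups indefinitely. The argument must exploit that each reallocation strictly rebalances earnings toward the minimum-earning envied agent $j$ (we pick $j$ with minimum $p(X_j)$ among those $i$ strongly envies), and that the new $N_H$ entrant lands at the floor value $k$. I expect the decisive point to be: after agent $i$ leaves $N_H$, for $i$ to re-enter $N_H$ some agent must strongly envy $i$ while $i$ is in $N_L$, but by Invariant~\ref{invariant:2EFX_for_N_L} no two $N_L$ agents strongly envy each other, so the envious agent is in $N_H$ and $i$'s bundle at that point has $p(X_i)<k$ by Lemma~\ref{lemma:payment<k} — and then a careful accounting of how $i$'s earning evolved since it entered $N_L$ yields a contradiction or a strict monotone decrease in a potential such as $\sum_{j\in N_L} p(X_j)$ restricted appropriately, or the count of agents in $N_L$ with earning below $k$. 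Once the $O(n^2)$ round bound is in hand, each round's work — scanning all pairs to find a strongly-envious $(i,j)$ and performing the constant-size bundle surgery — is $O(nm)$ naively but can be organized as $O(m)$ amortized per round with suitable bookkeeping, or one simply states $O(nm)$ per round and $O(n^2)$ rounds still gives a polynomial bound; to match the claimed $O(n^2m)$ I would argue the per-round cost is $O(m)$ by maintaining, for the distinguished envious agent, the needed comparisons incrementally, or reinterpret the round count as $O(n)$ via the stronger "participates at most once" statement hinted at for the $\{1,k\}$ case. The write-up should make explicit which of these bounds is being used so the $O(n^2m)$ figure is justified.
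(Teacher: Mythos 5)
Your correctness argument (PO from the equilibrium invariant, $(2-1/k)$-EFX from the while-condition) matches the paper and is fine. The gap is in the termination bound, which you yourself flag as unresolved (``the main obstacle is pinning down exactly why the process cannot cycle''). None of the candidate potentials you list is the one that works, and the $O(n^2)$-round framing is both unproven and weaker than what is needed to cleanly reach $O(n^2m)$. The paper's argument is a single monotone potential: the number of agents in $N_H$ that hold at least one \emph{low}-payment item. In each round, agent $i$ (who by the invariants holds exactly one high-payment item, since $p(X_i^t)\le k+z<2k$) surrenders that item and moves to $N_L$ with only low-payment items, so $i$ leaves this set; agent $j$ enters $N_H$ holding \emph{only} the single high-payment item $e$, so $j$ does not enter the set. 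Moreover such a single-high-item agent can never again trigger a reallocation: it is trivially $(2-1/k)$-EFX towards everyone (removing its only item leaves cost $0$), and it cannot be the envied party either, since by Lemma~\ref{lemma:property_of_not2EFX} the envied agent always lies in $N_L$. Hence the potential strictly decreases by one per round and the loop runs at most $n$ times --- there is no oscillation to rule out, because re-entry into $N_H$ always happens with an empty low-payment sub-bundle.

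With the $O(n)$ round bound, the complexity is immediate: each round costs $O(nm)$ to test the while-condition and find the pair $(i,j)$ (taking $j$ of minimum earning among those $i$ strongly envies costs $O(m)$ once $i$ is fixed), giving $O(n^2m)$ overall. Your alternative of amortizing the per-round cost to $O(m)$ over $O(n^2)$ rounds is not needed and is not justified as written. To repair your proof, replace the speculative progress measures with the potential above and verify the two facts it rests on: (i) $|X_i^t\cap H|=1$ for the strongly-envious agent $i$, and (ii) the new $N_H$ member $j$ holds no low-payment items and is permanently inert.
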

\begin{proof}
    By the above analysis, in each while-loop of the algorithm, an agent $i\in N^{t}_H$ and an agent $j\in N^{t}_L$ will be selected, and after item reallocations, agent $j$ receives one item with payment $k$ and agent $i$ receives all other low payment items.
    Therefore the number of agents in $N_H$ that receive at least one low payment item will decrease by one after each while-loop, which implies that the total number of while-loops is at most $n$.
    Given agent $i$, finding the agent $j$ that is most envied by agent $i$ can be done in $O(m)$ time. Moreover, checking whether the allocation is $(2 - 1/k)$-EFX can be done in $O(n m)$ time.
    Hence the algorithm runs in $O(n^2 m)$ time.
\end{proof}

In conclusion, for any given $\{1,k\}$-instance, we first compute the pEF1 $\{1,k\}$-payment equilibrium in $O(k n^2 m^2)$ time~\cite{conf/sigecom/WuZZ23}. 
Subsequently, we apply Algorithm~\ref{alg:reallocation_bivalued} to iteratively reallocate the items until the allocation is $(2 - 1/k)$-EFX.
Since we maintain an equilibrium, PO is guaranteed.
As demonstrated in the previous sections, our algorithm guarantees this outcome in polynomial time, leading to the following main result.

\begin{theorem} \label{theorem:2EFX_PO_bivalued}
There exists a polynomial-time algorithm that computes a $(2 - 1/k)$-EFX and PO allocation for every given bi-valued instance.
\end{theorem}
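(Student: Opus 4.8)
The plan is to assemble Theorem~\ref{theorem:2EFX_PO_bivalued} directly from the pipeline already set up in this section, treating it as a bookkeeping combination of Lemma~\ref{lemma:1_k_payment_equilibrium}, Algorithm~\ref{alg:reallocation_bivalued}, Lemma~\ref{lemma:invariants_hold}, and the running-time lemma. First I would dispose of the degenerate cases: by the preliminaries, a bi-valued instance where one of the two values is $0$ is a binary instance, handled by the cited work, so we may assume the instance is a $\{1,k\}$-instance with $k>1$ after rescaling; and if $L$ or $H$ is empty the initial equilibrium is already pEFX, hence $(2-1/k)$-EFX and PO. So assume both groups are nonempty.

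Next I would run the construction: invoke Lemma~\ref{lemma:1_k_payment_equilibrium} to obtain a pEF1 $\{1,k\}$-payment equilibrium $(\bX^0,\bp)$ in polynomial time, then feed it to Algorithm~\ref{alg:reallocation_bivalued}. The correctness argument is: if the algorithm terminates, its while-loop guard guarantees the output allocation $\bX$ is $(2-1/k)$-EFX; and by Invariant~\ref{invariant:1kequilibrium} (maintained via Lemma~\ref{lemma:invariants_hold}, whose base case is $(\bX^0,\bp)$ itself since pEF1 plus Corollary~\ref{corollary:non2pEFX_property} gives the invariant ranges), $(\bX,\bp)$ is an equilibrium, so $\bX$ is PO by the First Welfare Theorem lemma. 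The one subtlety to flag is that Invariant~\ref{invariant:2EFX_for_N_L} and~\ref{invariant:2pEFX_for_N_H} reference a fixed integer $z<k$; I would note that when $(\bX^0,\bp)$ is already $(2-1/k)$-pEFX the algorithm does nothing and we are done, and otherwise Corollary~\ref{corollary:non2pEFX_property} supplies such a $z$, so the invariants are well-posed at $t=0$.

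For termination and running time I would quote the running-time lemma: each while-iteration moves one agent out of "$N_H$ agents holding a low-payment item", so there are at most $n$ iterations, each costing $O(nm)$ to find the violating pair and recheck $(2-1/k)$-EFX, giving $O(n^2m)$ for the reallocation phase; adding the $O(n^2m^2)$ (or the cited $O(kn^2m^2)$) cost of computing the initial equilibrium keeps the total polynomial. The main obstacle is not any single calculation but making sure the hand-off between the lemmas is airtight: in particular that Lemma~\ref{lemma:property_of_not2EFX} legitimately applies in every round (it needs the invariants at the start of that round, which is exactly what the inductive structure of Lemma~\ref{lemma:invariants_hold} provides), and that the swap performed in the loop is MPB-feasible for \emph{both} agents — for $i$ because $X_j^t\subseteq\MPB_i$ by Lemma~\ref{lemma:property_of_not2EFX}, and for $j$ because the high-payment item $e$ satisfies $e\in\MPB_j$ by Lemma~\ref{lemma:1k_equilibrium_properties}(3) (using $j\in N_L^t$). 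Once these links are stated explicitly, the theorem follows by combining the three ingredients (polynomial-time start, invariant-preserving polynomially-bounded loop, equilibrium-implies-PO), and I would close with the remark that the output satisfies $(2-1/k)$-EFX by the loop guard and PO by the maintained equilibrium. $\hfill\qed$
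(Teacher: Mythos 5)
Your proposal is correct and follows essentially the same route as the paper: it assembles the theorem from Lemma~\ref{lemma:1_k_payment_equilibrium}, the while-loop guard of Algorithm~\ref{alg:reallocation_bivalued}, the invariant-maintenance argument of Lemma~\ref{lemma:invariants_hold}, and the $O(n^2m)$ termination bound, with PO following from the maintained equilibrium. The extra care you take about the base case of the invariants (via Corollary~\ref{corollary:non2pEFX_property}) and the MPB-feasibility of both sides of the swap matches exactly how the paper justifies those steps in Lemmas~\ref{lemma:property_of_not2EFX} and~\ref{lemma:invariants_hold}.
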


\section{EFX and PO Allocations for \texorpdfstring{$\{1,2\}$}{}-Instances}

In this section, we consider the case that $k=2$ and propose Algorithm \ref{alg:efxandpo} for computing EFX and PO allocations. As in Section~\ref{Section3}, our algorithm begins with the pEF1 $\{1,k\}$-payment equilibrium $(\bX, \bp)$ for $\{1,k\}$-instances by existing works~\cite{conf/aaai/GargMQ22,conf/sigecom/WuZZ23}\footnote{In this paper, we mainly follow the algorithm of~\cite{conf/sigecom/WuZZ23}, which originally works for the weighted setting in which each agent has a non-negative weight $w_i > 0$ and returns a pWEF1 equilibrium. Here we let all agents' weight be $1/n$.}.
However, in order to further improve the approximation guarantee (regarding EFX), we need more structural properties.
In the following, we review the algorithm of~\cite{conf/sigecom/WuZZ23}, highlighting some of their key operations and invariants, which will be the building blocks of our algorithm for computing EFX and PO allocations.

\subsection{Computation of pEF1 Equilibrium by \texorpdfstring{\cite{conf/sigecom/WuZZ23}}{}}

Their algorithm has two main components: (1) the initialization of an equilibrium $(\bX^0, \bp^0)$ (which is not necessarily pEF1) and partition of agents into groups $\{N_r\}_{r \in [R]}$; and (2) a sequence of item reallocations to turn the equilibrium pEF1.

\subsubsection{Initial Equilibrium and Agent Groups}

We first partition the items depending on whether they cost $k$ to all agents.

\begin{definition} [Item Groups]
We divide the items into two groups $M^+$ and $M^-$, where $M^+$ contains all consistently large items and $M^-$ contains the other items:
$$
    M^+ = \{e \in M : \forall i \in N, c_i(e) = k\}, \quad
    M^- = \{e \in M : \exists i \in N, c_i(e) = 1\}.
$$
\end{definition}

\paragraph{Initial Equilibrium.}
We set the payment vector $\bp^0$ as $p^0(e) = 1$ for all $e\in M^-$ and $p^0(e) = k$ for all $e\in M^+$.
Then for each item $e \in M^-$, we allocate it to an arbitrary agent $i$ with $c_i(e) = 1$; items in $M^+$ are allocated arbitrarily. 
Based on the allocation, we construct a directed graph $G_X$ in which the agents are vertices and there is an edge from agent $i$ to agent $j$ if $X_j\cap \MPB_i \neq \emptyset$.
As long as there is an MPB path from an agent $i$ to agent $j$ with $\min_{e\in X_i} \{p^0(X_i-e)\} > p^0(X_j)$, we resolve the path by reallocating items in reverse direction following the path.
We use $\bX^0$ to denote the eventual allocation and $(\bX^0, \bp^0)$ to denote the initial equilibrium (since we allocate an item $e$ to agent $i$ only if $e\in \MPB_i$, $(\bX^0, \bp^0)$ is an equilibrium).
Moreover, we have $\alpha_i = 1$ for all $i\in N$.

\begin{definition}[Big and Least Earner] \label{def:big-least-earners}
    Given an equilibrium $(\bX, \bp)$, an agent $b\in N$ is called a big earner if $b \in \arg\max_{i\in N} \left\{\hat{p}_i\right\}$, where $\hat{p}_i = \min_{e\in X_i}\{ p(X_i - e) \}$ is the earning of agent $i$ after removing the item with highest payment; an agent $l$ is called a least earner if $l \in \arg\min_{i\in N} \{ p(X_i) \}$.
\end{definition}
 
\paragraph{The Agent Groups.} 
Based on the initial allocation, we divide agents into agent groups $N_1, \ldots, N_R$ as follows: we repeatedly identify the big earner $b$ among agents without a group and put $b$ together with all agents that can reach $b$ on graph $G_X$ into one group.
We call $b$ the representative of the group.

\begin{lemma}[Agent Groups]\label{lemma:leximin}
    Based on the initial equilibrium $(\bX^0, \bp^0)$, we can partition the agents into groups $\{N_r\}_{r \in [R]}$ with the following properties:
    \begin{enumerate}
        \item[(1)] For all $1\leq r < r'\leq R$, agent $i\in N_r$ and $j\in N_{r'}$, we have $c_j(e) = k$ for all $e \in X_i^0$.
        \item[(2)] All consistently large items are allocated to the lowest group: $M^+ \subseteq \bigcup_{i \in N_R}X^0_i$.
        \item[(3)] Any two agents from the same group are pEF1 towards each other.
    \end{enumerate}
\end{lemma}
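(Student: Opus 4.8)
The plan is to analyze the equilibrium $(\bX^0,\bp^0)$ produced after the path‑resolution phase and to exploit its defining termination property: there is no MPB path $u\rightsquigarrow v$ in $G_X$ with $\hat p_u > p^0(X_v)$, so every MPB path $u\rightsquigarrow v$ satisfies $\hat p_u\le p^0(X_v)$. Two further facts will be used throughout: (i) $\alpha_i=1$ for every agent, hence $\MPB_i=\{e\in M^-:c_i(e)=1\}\cup M^+$; and (ii) the groups are built greedily — at stage $r$ the big earner $b_r$ (a maximizer of $\hat p$) is chosen among the currently ungrouped agents, and $N_r$ consists of $b_r$ together with every currently ungrouped agent that can reach $b_r$ in $G_X$. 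In particular every $i\in N_r$ reaches $b_r$, and every $i\in N_r$ was ungrouped at stage $r$, so $\hat p_i\le\hat p_{b_r}$.

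Properties~(1) and~(2) both reduce to a short reachability‑contradiction argument. For~(1), suppose $i\in N_r$, $j\in N_{r'}$ with $r<r'$, and $c_j(e)=1$ for some $e\in X_i^0$; then $e\in M^-$, so $\alpha_{j,e}=1=\alpha_j$, hence $e\in\MPB_j$ and $G_X$ contains the edge $j\to i$. Since $i$ reaches $b_r$, so does $j$; but $j$ is still ungrouped at stage $r$ (it only enters $N_{r'}$ with $r'>r$), so it would have been placed into $N_r$ — a contradiction. (Items of $X_i^0\cap M^+$ need no argument, since they cost $k$ to every agent.) For~(2), if $e\in M^+$ is held by agent $j$ in $\bX^0$, then $e\in\MPB_a$ for every $a$, so $G_X$ contains the edge $a\to j$ for all $a$, i.e.\ every agent reaches $j$. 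If $j\in N_r$ with $r<R$, then $b_{r+1}$ exists, reaches $j$, and $j$ reaches $b_r$, so $b_{r+1}$ reaches $b_r$; being ungrouped at stage $r$, $b_{r+1}$ would be placed into $N_r$, contradicting $b_{r+1}\in N_{r+1}$. Hence every holder of an $M^+$ item lies in $N_R$, which is exactly $M^+\subseteq\bigcup_{i\in N_R}X_i^0$.

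Property~(3) is the delicate part. Fix an ordered pair $i,j\in N_r$ and aim for $\hat p_i\le p^0(X_j)$ (the case $X_i=\emptyset$ is trivial). If $i$ reaches $j$ in $G_X$, this is immediate from the termination property; and if $j=b_r$ then $\hat p_i\le\hat p_{b_r}\le p^0(X_{b_r})=p^0(X_j)$. The genuine difficulty is the remaining case, where $i$ does not reach $j$ and $j\ne b_r$: there I would establish that $p^0(X_j)\ge\hat p_{b_r}$ for every $j\in N_r$, which combined with $\hat p_i\le\hat p_{b_r}$ closes the argument. Proving this lower bound is the main obstacle; it is where the leximin character of $(\bX^0,\bp^0)$ enters — the equilibrium is not merely path‑resolved but leximin‑optimal with respect to the earning profile among $\{1,k\}$‑payment equilibria, so an agent in the ancestor set of $b_r$ earning strictly less than $\hat p_{b_r}$ would contradict leximin‑optimality — and it is the point at which I would invoke the corresponding invariant of~\cite{conf/sigecom/WuZZ23} rather than re‑derive it from scratch.
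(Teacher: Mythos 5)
The paper does not actually prove Lemma~\ref{lemma:leximin}: it is stated as a recap of the construction of~\cite{conf/sigecom/WuZZ23}, and all three properties are imported from that work without argument. Measured against that, your proposal does strictly more. Your proofs of Properties~(1) and~(2) are correct and self-contained: since $\alpha_i=1$ for every agent under $\bp^0$, an item $e\in M^-$ with $c_j(e)=1$ (respectively, any item of $M^+$) is an MPB item of $j$ (respectively, of every agent), which creates an edge of $G_X$ into its holder; combined with the greedy, reachability-closed way the groups are formed, the two contradictions you derive go through.

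For Property~(3) the gap you flag is genuine, and you have isolated it correctly. The termination condition of the path-resolution phase only yields $\hat p_i\le p^0(X_j)$ when $G_X$ contains an MPB path from $i$ to $j$, whereas co-membership in $N_r$ only guarantees paths from each of $i$ and $j$ \emph{to} $b_r$, not between them. It is consistent with the termination condition (as the construction is paraphrased in this paper) that some $j$ holds only $M^-$ items that no other agent values at $1$ --- so $G_X$ has no edge into $j$ and no resolution step can ever raise $j$'s earning --- while $j$ still has an outgoing edge to $b_r$ and therefore lands in the same group as a much larger earner $i$; pEF1 between $i$ and $j$ would then fail. So your reduction to the claim $p^0(X_j)\ge\hat p_{b_r}$ for all $j\in N_r$ is the right target, but it cannot be extracted from the termination condition alone: it genuinely requires the stronger balancedness/leximin invariant of the initial equilibrium established in~\cite{conf/sigecom/WuZZ23}, which you invoke but do not verify. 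Since the paper itself delegates the entire lemma to that citation, your treatment is no less complete than the paper's; just be aware that Property~(3) is the one part that truly depends on details of the source construction not reproduced here, and a fully self-contained proof would have to import (or re-derive) that invariant explicitly.
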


In the following, we call $N_1$ the \emph{highest} group and $N_R$ the \emph{lowest}.
Moreover, we re-index the agents such that agents in $N_1$ are assigned indices $\{1, 2, \dots, |N_1|\}$, agents in $N_2$ are assigned the indices $\{|N_1| + 1, |N_1| + 2, \dots, |N_1| + |N_2|\}$, and so on.

\subsubsection{Reallocations to Ensure pEF1}

Given the initial equilibrium $(\bX^0, \bp^0)$, we identify the big earner $b$ and the least earner $l$\footnote{Throughout the whole paper, when selecting a big earner or a least earner, we break ties by picking the one with the smallest index.}.
If $b$ is pEF1 toward $l$, then the equilibrium is pEF1; otherwise, we reallocate an item from $b$ to $l$, during which we may raise the payment of all items received by agents in the same group as $b$ by a factor of $k$. 

\begin{theorem}[\cite{conf/sigecom/WuZZ23}]\label{the:ef1+po-bivalued}
    Algorithm~\ref{alg:ef1andpo} computes an EF1 and PO allocation for any given bi-valued instance in polynomial time.
\end{theorem}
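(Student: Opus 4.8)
The plan is to verify three things about Algorithm~\ref{alg:ef1andpo}: (a) every intermediate pair $(\bX^t,\bp^t)$ it produces is a Fisher-market equilibrium, so by the lemma that every equilibrium is PO the output is Pareto optimal; (b) the reallocation phase halts after polynomially many steps; and (c) on termination the equilibrium is pEF1, which by Lemma~\ref{lemma:pEF1_implies_EF1} forces the output to be EF1. Together (a) and (c) give the theorem. I would run the whole analysis under the invariants that $(\bX^t,\bp^t)$ is an equilibrium, that each agent holds only $\MPB$ items, and that (the current update of) the group partition of Lemma~\ref{lemma:leximin} is consistent with $\bp^t$.

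For (a): the initial pair $(\bX^0,\bp^0)$ is an equilibrium by construction, since each $e\in M^-$ goes only to an agent who values it at $1=p^0(e)$ and the path-resolution step moves an item $e$ to agent $i$ only when $e\in\MPB_i$; this also yields $\alpha_i=1$ for all $i$. A plain reallocation step pushes a single item backward along an MPB path from the big earner $b$ to the least earner $l$, so each recipient along the path receives an MPB item and the equilibrium survives. The delicate point is the payment raise: after multiplying by $k$ the payments of all items currently held by $b$'s group, one must recheck the $\MPB$ condition for every agent. This is exactly where the group structure is used — Lemma~\ref{lemma:leximin}(1) says items of a strictly higher group cost exactly $k$ to every agent in a lower group, so scaling such a payment from $1$ to $k$ keeps the pain-per-buck ratios aligned, and Lemma~\ref{lemma:leximin}(2) keeps the consistently-large items out of the groups being raised; one checks that no agent is left holding a non-$\MPB$ item and that, after re-normalizing the $\alpha_i$ if needed, the pair is again an equilibrium, with payments staying in $\{1,k\}$ as in Lemma~\ref{lemma:1_k_payment_equilibrium}.

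For (c): let $b$ be a global big earner and $l$ a global least earner, and suppose the loop exits because $b$ is pEF1 towards $l$, i.e.\ $\hat{p}_b\le p(X_l)$. Then for arbitrary agents $i,j$ we have $\min_{e\in X_i}p(X_i-e)=\hat{p}_i\le\hat{p}_b\le p(X_l)\le p(X_j)$, with the case $X_i=\emptyset$ trivial, so $i$ is pEF1 towards $j$; hence $(\bX,\bp)$ is pEF1. Lemma~\ref{lemma:pEF1_implies_EF1} then gives EF1, and the equilibrium-implies-PO lemma gives PO.

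The main obstacle is (b), the termination and polynomial-time bound. I would argue it with a two-level potential. First, payments are non-decreasing throughout and each raise strictly increases $\sum_{e\in M}p(e)$; since payments stay bounded (in the bi-valued case within $\{1,k\}$, hence the sum is at most $km$), there are only polynomially many raise events. Second, within a maximal block of plain reallocations with $\bp$ fixed, the sorted earning vector $(p(X_i))_{i\in N}$ changes strictly monotonically in a leximin-type order — the least earner's earning is non-decreasing, the big earner's $\hat{p}$-value is non-increasing, and a careful tie-breaking/progress measure rules out cycling — and since earnings are integers in a bounded range this can happen only polynomially often. Making the leximin argument airtight, in particular ruling out that moving an item from $b$ to $l$ spawns a fresh, equally-bad violation that is then chased forever, and showing that raises do not reset the inner progress, is the technically heaviest part and is where the analysis of~\cite{conf/sigecom/WuZZ23} does its real work.
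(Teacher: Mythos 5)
First, a point of reference: the paper does not prove Theorem~\ref{the:ef1+po-bivalued} at all. It is imported verbatim from \cite{conf/sigecom/WuZZ23}, and the surrounding material (Lemmas~\ref{lemma:leximin}, \ref{lemma:Group_Invariant} and \ref{lemma:Earning_Invariant}) merely restates, without proof, the properties of that algorithm that the present paper needs downstream. So there is no in-paper argument to compare yours against; what can be judged is whether your proposal would stand as a self-contained proof of the cited result.

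Your parts (a) and (c) are sound and follow the same Fisher-market template the paper uses elsewhere: the termination condition $\hat{p}_b \le p(X_l)$ does imply global pEF1, since $\hat{p}_b$ maximizes $\hat{p}_i$ over agents and $p(X_l)$ minimizes $p(X_j)$, and pEF1 together with the equilibrium property yields EF1 and PO via Lemma~\ref{lemma:pEF1_implies_EF1} and the First Welfare Theorem. The genuine gap is in (b). The bound on raise events is fine (in fact more directly: a raise requires $b\in U$ and removes $b$'s entire group from $U$, so there are at most $R\le n$ raises, which is also what makes your claim that payments stay in $\{1,k\}$ true rather than circular). But for the plain transfers between raises you only assert that the earning vector ``changes strictly monotonically in a leximin-type order'' and that ``a careful tie-breaking/progress measure rules out cycling,'' and you then explicitly defer that step to \cite{conf/sigecom/WuZZ23}. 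That is precisely the step a proof must supply: a single transfer moves $b$'s earning down by up to $k$ and $l$'s up by up to $k$, so neither the big earner's $\hat{p}$-value nor the multiset of earnings is obviously monotone from round to round, and the else-branch of Algorithm~\ref{alg:ef1andpo} performs a three-way exchange whose effect on your proposed potential is not addressed at all. Without an explicit, verified potential (e.g., the non-decreasing least-earner earning of Lemma~\ref{lemma:Earning_Invariant} combined with a secondary index- or group-based measure), polynomial termination is not established, and the proposal as written is an outline of the argument in \cite{conf/sigecom/WuZZ23} rather than a proof of it.
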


\begin{algorithm}[H]
    \caption{Compute a pEF1 equilibrium} \label{alg:ef1andpo}
    \SetKw{Break}{Break}
    \KwIn{A bi-valued instance $(M, N, \bf{c})$}
    let $(\bX, \bp, \{N_r\}_{r\in [R]})$ be the initial equilibrium and agent groups \;
    initialize the set of unraised agents $U\gets N$ \;
    \While{\textbf{True}}{
        let $b \gets \arg\max_{i\in N} \{ \hat{p}_i \}$ be the big earner \;
        let $l \gets \arg\min_{i\in N} \left\{ p(X_i) \right\}$ be the least earner \;
        \If{$\hat{p}_b \leq p(X_l)$}{
            \textbf{output} $(\bX, \bp)$ and \textbf{terminate}.
        }
        suppose that $b \in N_r$ and $l\in N_{r'}$\;
        \uIf {$l \in U$}{
            \If{$b \in U$}{
                raise the payment of all chores in $\bigcup_{i\in N_r} X_i$ by a factor of $k$ \;
                $U \gets U \setminus N_r$ \;
            }
            pick an arbitrary $e\in X_b$ \;
            update $X_b \gets X_b - e$, $X_l \gets X_l + e$ \;}
        \Else {
            we have $b\in N\setminus U$ and there exists $i \in U$ with $X_i \cap X^0_l \neq \emptyset$ \;
            pick any $e_1\in X_b \cap \MPB_i$ and $e_2\in X_i\cap X^0_l$ \;
            update $X_b \gets X_b - e_1$, $X_l \gets X_l + e_2$ and $X_i \gets X_i + e_1 - e_2$ \;
        }
    }
\end{algorithm} 



\begin{lemma} \label{lemma:Group_Invariant}
    During Algorithm~\ref{alg:ef1andpo}, following properties are always maintained:
    \begin{enumerate}
        \item[(1)] At any point in time, $(\bX, \bp)$ is an equilibrium.
        \item[(2)] For all $r\in [R]$, agents in the same group $N_r$ are pEF1 towards each other.
        \item[(3)] For all $i\in U$, we have $\alpha_i = 1$; for all $i\in N\setminus U$, we have $\alpha_i = 1/k$.
        \item[(4)] There exists $r^* < R$ such that all agents in groups $N_1,N_2,\ldots,N_{r^*}$ are raised while agents in groups $N_{r^*+1},N_{r^*+2},\ldots,N_{R}$ are unraised.
    \end{enumerate}
\end{lemma}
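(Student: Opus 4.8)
The four properties of Lemma~\ref{lemma:Group_Invariant} are invariants of the while-loop in Algorithm~\ref{alg:ef1andpo}, so the natural proof is by induction on the iterations: verify each property holds for the initial equilibrium $(\bX^0,\bp^0)$ with groups $\{N_r\}$ and $U=N$, then show each iteration (of either the \textbf{if} or the \textbf{else} branch) preserves all four. I would state and prove the base case first (Property (1) because items are assigned only when MPB, Property (3) because $\alpha_i=1$ for all $i$ initially as noted after the construction, Property (2) from Lemma~\ref{lemma:leximin}(3), and Property (4) vacuously with $r^*=0$), then treat the inductive step property by property, since each one is used in establishing the next.

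For the inductive step I would argue in the order (1), (3), (4), (2), as the later ones lean on the earlier ones. \emph{Property (1):} when the \textbf{if}-branch raises the payments of all chores in $\bigcup_{i\in N_r}X_i$ by $k$, every such chore is an MPB chore of its holder both before and after (multiplying all of a group's payments by $k$ scales every pain-per-buck ratio in that group uniformly, and one checks the raised ratios still minimize), so the equilibrium survives; then $e$ moved from $b$ to $l$ must lie in $\MPB_l$ — this is exactly why the algorithm only reallocates along MPB edges, and in the \textbf{else}-branch $e_1\in X_b\cap\MPB_i$ and $e_2\in X_i\cap X_l^0\subseteq\MPB_l$ (using that $X_l^0$ consists of items $l$ received at cost $1$, hence MPB for $l$ since $\alpha_l=1$), so again equilibrium is preserved. \emph{Property (3):} in the \textbf{if}-branch, raising a raised-to-be group $N_r$ by $k$ turns $\alpha_i$ from $1$ to $1/k$ for exactly the agents of $N_r$, which matches the update $U\gets U\setminus N_r$; the \textbf{else}-branch moves no agent in or out of $U$ and raises nothing, so $\alpha$'s are untouched. \emph{Property (4):} raising happens only in the \textbf{if}-branch and only when $b\in U$; I would argue that the big earner, when unraised, always sits in the topmost still-unraised group, so the newly raised group is exactly $N_{r^*+1}$, advancing $r^*$ by one and keeping the raised groups a prefix; the strictness $r^*<R$ comes from the fact that the algorithm terminates (or that $N_R$ holding all of $M^+$ is never the big earner's group while unraised) — I would point to Lemma~\ref{lemma:leximin}(2) and the termination condition here.

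\emph{Property (2)} is the one I expect to be the main obstacle. Within a group that is untouched in the current iteration, pEF1 is inherited trivially; the work is in the group(s) actually modified. In the \textbf{if}-branch with $b,l$ in the same group $N_r$: uniformly scaling payments preserves pEF1 within $N_r$, and moving one chore from the big earner $b$ to the least earner $l$ of that group cannot break pEF1 — $b$ loses a chore so $b$ is still pEF1 toward everyone, $l$ gains a chore but $l$ had the minimum earning so after gaining one chore $l$ is pEF1 toward all (this needs the precise pEF1 bookkeeping, comparing $\hat p$ values against $p(X_l)$, and is where one must be careful). When $b\in N_r$ and $l\in N_{r'}$ with $r\neq r'$: $b$ loses a chore (fine), and in group $N_{r'}$ the earning of $l$ only increases, which can only help $l$'s envy but could hurt others' pEF1 toward $l$; here one invokes that $l$ was the global least earner and the reallocated payment is small (a single chore of payment $1$ or, after possible raising, that the increment is controlled) to keep the pEF1 slack. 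The \textbf{else}-branch is the trickiest: it touches three agents $b,l,i$ across possibly three groups, swapping $e_1$ into $i$ and $e_2$ out, and one must check pEF1 is maintained in each of their groups — I would handle $b$ and $l$ as above and argue $i$'s bundle payment is unchanged (it loses $e_2$ of payment $1$ and gains $e_1$; one must verify $p(e_1)=1$ too, using that $e_1\in X_b$ and, since $b$ is unraised here... actually $b\in N\setminus U$ in this branch, so $b$ is raised and $p(e_1)$ could be $k$ — this case needs genuine care and is the crux). The cleanest route is probably to cite that these are exactly the invariants maintained in the correctness proof of~\cite{conf/sigecom/WuZZ23} and reproduce their argument, spelling out the payment-parity bookkeeping in the \textbf{else}-branch that makes $i$'s earning and pEF1 status invariant.
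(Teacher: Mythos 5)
The paper does not actually prove this lemma: like Theorem~\ref{the:ef1+po-bivalued} and Lemma~\ref{lemma:Earning_Invariant}, it is stated as a property of the pEF1 algorithm of~\cite{conf/sigecom/WuZZ23} and imported without argument. So your closing suggestion --- to cite that these are exactly the invariants established in the correctness proof of~\cite{conf/sigecom/WuZZ23} --- is what the paper itself does, and your inductive skeleton (base case from the initial equilibrium and Lemma~\ref{lemma:leximin}, then a case analysis over the two branches of the while-loop, proving the properties in the order (1), (3), (4), (2)) is the right shape for a self-contained proof.

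Taken as a standalone proof, however, the sketch has concrete gaps beyond the one you flag. First, in the \textbf{else}-branch you justify $e_2\in\MPB_l$ by ``$\alpha_l=1$'', but that branch is entered precisely when $l\notin U$, so by Property~(3) $\alpha_l=1/k$; the correct argument must track the payments of the items of $X^0_l$ that migrated to unraised agents, which is exactly the bookkeeping you defer. Second, you do not justify the existence claims that branch relies on, namely that some $i\in U$ currently holds an item of $X^0_l$ and that $X_b\cap\MPB_i\neq\emptyset$. Third, for Property~(4) you assert that an unraised big earner always lies in the topmost unraised group and that $N_R$ is never raised; both need proof (the first via Lemma~\ref{lemma:leximin}(1) together with the within-group pEF1 invariant, the second via Lemma~\ref{lemma:leximin}(2) and the termination condition). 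Fourth, Property~(2) in the \textbf{else}-branch, which you correctly identify as the crux, is left entirely open. So the proposal is acceptable only insofar as it falls back on citing~\cite{conf/sigecom/WuZZ23}; as a from-scratch argument it is incomplete.
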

	
We use $\bX^t$ and $\bp^t$ to denote the allocation and the payment at the beginning of round $t$. Besides, we use $b^t$ and $l^t$ to denote the big earner and the least earner identified at the beginning of round $t$. 

\begin{lemma} \label{lemma:Earning_Invariant}
    The earning status of agents satisfies the following properties:
    \begin{enumerate}
        \item[(1)] For all $t$ we have $p^0(X^0_{l^0}) \leq p^1(X^1_{l^1})\leq \cdots \leq p^t(X^t_{l^t})$.
        \item[(2)] In each round, the earning of agent $i$ does not change, unless $i$ is raised during this round or $i$ is the big or least earner in this round. Moreover, for all $e\in M$, $p(e)\in \{1,k\}$.
    \end{enumerate}
\end{lemma}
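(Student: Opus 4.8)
The plan is to prove both parts by induction on the round index $t$, via a case analysis on which branch of the while-loop of Algorithm~\ref{alg:ef1andpo} is executed in round $t$, leaning on the structural facts already recorded in Lemma~\ref{lemma:Group_Invariant} — in particular that $(\bX,\bp)$ remains an equilibrium, that each group is raised at most once, and that the raised groups always form a prefix $N_1,\dots,N_{r^*}$ with $r^*<R$, so that $N_R$ is never raised.

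For the payment claim in part~(2), I would first note the base case $p^0(e)\in\{1,k\}$ for all $e$ from the construction of the initial equilibrium. The payment of an item is modified only when its current owner's group is raised, in which case it is multiplied by $k$; hence it suffices to maintain the sub-invariant that whenever a group $N_r$ is raised, every item it currently holds has payment $1$. To establish this I would track where payment-$k$ items can reside: the only items with payment $k$ are the $M^+$ items, which Lemma~\ref{lemma:leximin}(2) places in $N_R$ and which (I will argue) never leave $N_R$, together with items whose payment was previously raised, which by Lemma~\ref{lemma:Group_Invariant}(4) can only sit inside an already-raised prefix group. Since groups are raised in increasing index order and $N_R$ is never raised, the group $N_r$ currently being raised is the next unraised one and therefore contains no payment-$k$ item. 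The delicate point is the ``else'' branch, where an item $e_1$ moves from the (raised) big earner $b$ into an unraised agent $i$: I would show that this cannot later cause a double raise because $i$ simultaneously surrenders an item $e_2\in X_i\cap X^0_l$ of the same payment, so the multiset of item payments inside each group is unchanged by that transfer.

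For the per-round stability claim in part~(2), I would inspect the three possible modifications directly: raising $N_r$ changes the earnings only of the newly raised agents of $N_r$; the transfer $X_b\gets X_b-e$, $X_l\gets X_l+e$ changes only $b$ and $l$; and the triangular transfer in the ``else'' branch touches the bundles of $b$, $l$ and $i$, but the choice of $e_1\in X_b\cap\MPB_i$ together with $e_2\in X_i\cap X^0_l$ forces $p(e_1)=p(e_2)$ (all of the raised agent $b$'s items have payment $k$, and $e_2$ inherits payment $k$ as well), so $i$'s earning is preserved while $b$ is the big earner and $l$ the least earner. For part~(1), I would observe that in any round the only agent whose earning can strictly decrease is the big earner $b$: after giving up one item its earning is at least $\hat p_b$, which by the non-termination condition strictly exceeds $p(X_l)$, the current minimum earning; raising only increases earnings, the intermediary's earning is preserved by the argument above, and $l$'s earning increases. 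Hence the minimum earning over all agents never decreases from one round to the next, which yields the chain $p^0(X^0_{l^0})\le p^1(X^1_{l^1})\le\cdots\le p^t(X^t_{l^t})$.

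I expect the main obstacle to be the bookkeeping around the ``else'' branch: verifying that the two items exchanged there have equal payment, and, more globally, that the collection of payment-$k$ items stays confined to the currently raised prefix groups together with $N_R$, so that no item is ever raised twice. Everything else is a routine inspection of the update rules combined with Lemma~\ref{lemma:Group_Invariant}; indeed these earning properties are essentially those underlying the correctness of the algorithm of~\cite{conf/sigecom/WuZZ23}, so the proof can largely follow their analysis.
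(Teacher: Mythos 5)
First, a point of reference: the paper does not actually prove Lemma~\ref{lemma:Earning_Invariant}. It is stated (together with Lemma~\ref{lemma:Group_Invariant}) as a recalled property of the pEF1 algorithm of~\cite{conf/sigecom/WuZZ23}, so there is no in-paper proof to compare your argument against; what you are really attempting is to reconstruct part of the analysis of that prior work. Your overall skeleton (induction over rounds, case analysis on the branches of Algorithm~\ref{alg:ef1andpo}) is the natural one, and your derivation of part~(1) from part~(2) is sound: the big earner's post-transfer earning is at least $\hat{p}_b$, which exceeds $p(X_l)$ by the non-termination condition, raises only increase earnings, and all other earnings are unchanged, so the minimum earning is non-decreasing.

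The genuine gap is in your justification that no item is ever raised twice, which is the crux of the claim $p(e)\in\{1,k\}$. You assert that previously raised items ``can only sit inside an already-raised prefix group,'' citing Lemma~\ref{lemma:Group_Invariant}(4); but that property constrains which \emph{agents} are raised, not where raised \emph{items} reside, and the confinement claim is in fact false for the algorithm as written. In the if-branch, immediately after $N_r$ is raised, the item $e$ transferred from $b$ to the least earner $l$ has payment $k$ and lands in the bundle of an \emph{unraised} agent $l\in N_{r'}$ with $r'>r$; in the else-branch, $e_1$ has payment $k$ (since $\alpha_b=1/k$) and lands in the bundle of the unraised intermediary $i$. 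So payment-$k$ items routinely enter unraised groups, and one must separately argue that such a group is never subsequently raised while still holding one of these items (and likewise that $M^+$ items, which are $\MPB$ for every unraised agent and hence transferable, never end up in a group that later gets raised). This requires a monotonicity argument about which group the big earner can belong to over time, which your sketch does not supply. Relatedly, your claim that the intermediary's earning is preserved in the else-branch hinges on $p(e_1)=p(e_2)$; you assert that $e_2\in X_i\cap X^0_l$ ``inherits payment $k$,'' but $e_2$ is currently held by the unraised agent $i$, not by $l$, so whether it was raised depends on whether it was still in $l$'s bundle when $l$'s group was raised --- another fact that needs proof rather than assertion. You flag both points as the ``main obstacle,'' which is accurate, but flagging them is not the same as closing them.
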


\subsection{The Reallocation Algorithm}
In this section, we describe how to compute EFX and PO allocations for the case when $k = 2$, based on the pEF1 equilibrium ($\bX, \bp$) computed by Algorithm \ref{alg:ef1andpo}.
As in Section~\ref{Section3}, we use $H$ and $L$ to denote the high and low payment items, and $N_H$ and $N_L$ to denote the set of agents with and without high payment items.
By Property (3) of Lemma~\ref{lemma:Group_Invariant}, all raised agents are in $N_H$, and those agents have only high payment items.
Similar to Lemma~\ref{lemma:1k_equilibrium_properties}, for all agent $i\in U$ (unraised agents) we have $\alpha_i = 1$, which implies $c_i(e) = k$ and $e\in \MPB_i$ for all $e\in H$.

\begin{lemma}\label{lemma:k-in-MPB-U}
    For any item $e\in H$ and agent $i\in U$, we have $c_i(e) = k$ and $e\in \MPB_i$.
\end{lemma}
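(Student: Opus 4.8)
The plan is to establish this as a direct corollary of the structural properties already developed for the pEF1 equilibrium produced by Algorithm~\ref{alg:ef1andpo}. The statement claims that for any high payment item $e\in H$ and any unraised agent $i\in U$, we have both $c_i(e) = k$ and $e\in\MPB_i$. The key observation is that, by Property~(3) of Lemma~\ref{lemma:Group_Invariant}, every unraised agent $i\in U$ satisfies $\alpha_i = 1$. This mirrors exactly the situation handled in Lemma~\ref{lemma:1k_equilibrium_properties}(3), where agents in $N_L$ (who also have $\alpha_i = 1$) are shown to have cost $k$ and MPB membership for every high payment item.

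First I would recall that since $(\bX,\bp)$ is an equilibrium (Property~(1) of Lemma~\ref{lemma:Group_Invariant}), the pain-per-buck ratio satisfies $\alpha_{i,e} \geq \alpha_i$ for every agent $i$ and every item $e\in M$. Fix $e\in H$ and $i\in U$. Since $e\in H$ we have $p(e) = k$, and since $i\in U$ we have $\alpha_i = 1$. Then I would write
\begin{equation*}
    c_i(e) = \alpha_{i,e}\cdot p(e) \geq \alpha_i\cdot k = k.
\end{equation*}
Because the instance is bi-valued with costs in $\{1,k\}$, the inequality $c_i(e)\geq k$ forces $c_i(e) = k$. Substituting back, $\alpha_{i,e} = c_i(e)/p(e) = k/k = 1 = \alpha_i$, so $e\in\MPB_i$ by definition of the MPB set.

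There is essentially no obstacle here — the proof is a two-line calculation that reuses the argument from Lemma~\ref{lemma:1k_equilibrium_properties}(3) verbatim, with the role of ``$i\in N_L$'' replaced by ``$i\in U$'' (the common feature being $\alpha_i = 1$). The only thing to be careful about is to cite the correct invariant: the fact that $\alpha_i = 1$ for unraised agents must come from Property~(3) of Lemma~\ref{lemma:Group_Invariant} (or equivalently from the remark preceding the lemma statement that this is analogous to Lemma~\ref{lemma:1k_equilibrium_properties}), and the fact that $(\bX,\bp)$ is an equilibrium (hence $\alpha_{i,e}\geq\alpha_i$) from Property~(1) of the same lemma. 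No new machinery is needed, and the result will serve as the MPB-feasibility guarantee for reallocating high payment items to unraised agents in the subsequent $k=2$ algorithm.
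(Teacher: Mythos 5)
Your proposal is correct and matches the paper's own justification, which simply notes that $\alpha_i = 1$ for unraised agents (Property~(3) of Lemma~\ref{lemma:Group_Invariant}) and then invokes the same two-line calculation as in Lemma~\ref{lemma:1k_equilibrium_properties}(3): $c_i(e) \geq \alpha_i \cdot p(e) = k$ forces $c_i(e) = k$ and $\alpha_{i,e} = 1 = \alpha_i$. No differences worth noting.
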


Note that for $k=2$, we have $| p(X_i) - p(X_j) | \leq 2$ for any two agents $i,j$ since $(\bX,\bp)$ is pEF1.
Hence, we can further partition the agents into three groups $N^z, N^{z+1}$ and $N^{z+2}$ for some integer $z$, where an agent $i$ is contained in $N^t$ if $p(X_i) = t$.
Moreover, if $N^z = \emptyset$ or $N^{z+2} = \emptyset$, then the allocation is already pEFX (thus EFX and PO).
Therefore, a natural idea is to reallocate items between agents in $N^z$ and $N^{z+2}$ until one of the two groups is empty.
However, due to similar difficulties we mentioned in the previous section, achieving pEFX could be very difficult.
We therefore focus on achieving EFX while maintaining the equilibrium property.
We show that as long as the allocation is not EFX, we can find an agent $i\in N^{z+2}$ that is not EFX towards an agent $j\in N^z$, and reallocate some items between agent $i$ and agent $j$, after which either both of them join $N^{z+1}$, or one of them receives only high payment items and will never participate in any item reallocation in the future.
Throughout the reallocation, we maintain the following invariant.

\begin{invariant}[Payment Invariant]\label{invariant:EFX_payment}
    For any agent $i\in N$, $p(X_i) \in \{z, z+1, z+2\}$. Moreover, $(\bX, \bp)$ is always a pEF1 equilibrium.
\end{invariant}

To maintain the above invariant, we ensure that when we reallocate an item $e$ to an agent $i$, we have $e\in \MPB_i$.
Moreover, we ensure $N^{z+2} \subseteq N_H$ so that the equilibrium is pEF1.
Note that throughout the whole reallocation procedure, we do not change the payment of any item, which means that for any raised agent $i$ we always have $\alpha_i = \frac{1}{k}$.
Therefore, as long as $(\bX, \bp)$ is an equilibrium, all raised agents receive only high payment items and the equilibrium is pEFX for them, because their earnings after removing any item is at most $z$.

\smallskip

In the following, we establish some properties for agents $i$ and $j$, if $i$ is not EFX towards $j$. These properties will play a crucial role in designing our reallocation algorithm.

\begin{lemma} \label{lemma:property_of_notEFX}
    If agent $i$ is pEF1 towards agent $j$ but not EFX towards $j$, then we have
    \begin{itemize}
        \item $i\in N^{z+2}$ and $j\in N^z$;
        \item $X_i\cap H\neq \emptyset$ and $X_i\cap L\neq \emptyset$, which implies that agent $i\in N_H\cap U$;
        \item $X_j \subseteq \MPB_i$;
        \item $j\in U$. 
    \end{itemize}
\end{lemma}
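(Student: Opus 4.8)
The plan is to argue the four bullets in the order listed, reusing the argument template from Lemma~\ref{lemma:property_of_not2EFX} (specialized to $k=2$), and invoking the Payment Invariant~\ref{invariant:EFX_payment} and Lemma~\ref{lemma:Group_Invariant}(3). First, for the group membership: by Invariant~\ref{invariant:EFX_payment} every agent has earning in $\{z, z+1, z+2\}$, and I want to show that $i$ strongly envying $j$ forces $p(X_i) = z+2$ and $p(X_j) = z$. Since $i$ is pEF1 towards $j$, we have $\hat p_i \le p(X_j)$, where $\hat p_i = p(X_i) - \max_{e\in X_i} p(e)$. If $X_i$ contained a high payment item, then $\hat p_i \le p(X_i) - k = p(X_i) - 2$, so $c_i(X_i) - \max_{e\in X_i} c_i(e) \le \alpha_i(p(X_i)-2) \le \alpha_i\cdot p(X_j) \le c_i(X_j)$, contradicting that $i$ is not EFX towards $j$ — wait, that shows $X_i$ has \emph{no} high payment item only if we aren't careful. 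Let me restructure: I will first show $p(X_i) = z+2$ and $p(X_j) = z$ by a payment-gap argument analogous to Lemma~\ref{lemma:payment<k}. If $p(X_i) - p(X_j) \le 1$, then removing any single item from $X_i$ (each item costs $\ge 1$ in payment) gives $p(X_i - e) \le p(X_i) - 1 \le p(X_j)$, so $(\bX,\bp)$ is pEFX for $i$ towards $j$, hence (Lemma~\ref{lemma:bpEFX_implies_bEFX}) EFX, a contradiction. So $p(X_i) - p(X_j) = 2$, which by the invariant forces $p(X_i) = z+2$, $p(X_j) = z$, giving the first bullet with $N^{z+2}, N^z$.

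Next, the second bullet. We have just shown $i\in N^{z+2}$ and $i$ is not EFX towards $j$. If $X_i \subseteq H$, then all items in $X_i$ have payment $k=2$, so $p(X_i)$ is even and $p(X_i - e) = p(X_i) - 2 = p(X_j)$ for every $e\in X_i$, making $i$ pEFX hence EFX towards $j$ — contradiction; so $X_i \cap L \neq \emptyset$. Now I claim $X_i\cap H \neq\emptyset$: if not, $X_i\subseteq L$ so $i\in N_L$, and then (reading off Lemma~\ref{lemma:Group_Invariant}(3), since $N_L$ agents are unraised) $\alpha_i = 1$; but removing any low payment item drops $p(X_i)$ by $1$ to exactly $p(X_j)$, so again $i$ is pEFX hence EFX towards $j$, contradiction. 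Hence $X_i$ has both a high and a low payment item, so $i\in N_H$; and since $X_i$ contains a low payment item with $\alpha_{i,e} = c_i(e)/1 \ge \alpha_i$ and $\alpha_i\in\{1,1/k\}$, if $i$ were raised ($\alpha_i = 1/k$) then all of $X_i$ would be high payment (as in the remark after Lemma~\ref{lemma:1k_equilibrium_properties}), contradicting $X_i\cap L\neq\emptyset$; so $\alpha_i = 1$, i.e.\ $i\in U$. This gives the second bullet.

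For the third bullet, $X_j\subseteq\MPB_i$: I reuse the argument from Lemma~\ref{lemma:property_of_not2EFX}. We have $\alpha_i = 1$. Suppose some $e\in X_j$ has $c_i(e) = k$; then $c_i(X_j)\ge k$, and since $c_i(X_i) = \alpha_i p(X_i) = z+2 \le k+z$ (using $p(X_i) = z+2$ and $z<k$ would need checking, but for $k=2$ we have $z+2 \le k+z$ trivially), we get $c_i(X_i) - 1 \le z+1 \le 2k-1 = (2-1/k)\cdot k \le (2-1/k)c_i(X_j)$; actually I want the \emph{EFX} conclusion $c_i(X_i - e') \le c_i(X_j)$ for all $e'$, which is stronger, so I need the sharper bound: since $p(X_i) = z+2$ and $X_i$ contains a high payment item $e'$ with $c_i(e') = \alpha_i p(e') = k = 2$, removing it gives $c_i(X_i - e') = z+2 - 2 = z = p(X_j) \le c_i(X_j)$ once $c_i(X_j) \ge p(X_j)$, which holds since every $\alpha_{i,e}\ge\alpha_i = 1$. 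This contradicts $i$ not EFX towards $j$. Hence $c_i(e) = 1$ for all $e\in X_j$, so $\alpha_{i,e} = 1 = \alpha_i$ and $X_j\subseteq\MPB_i$. Finally, the fourth bullet, $j\in U$: since $j\in N^z$ and (by the invariant $N^{z+2}\subseteq N_H$, and more relevantly the structure maintained) if $j$ were raised then $X_j\subseteq H$, forcing $p(X_j)$ even and each removal dropping it by $2$; but also a raised agent has earning after removing any item at most $z$, consistent — so this needs the actual invariant that raised agents have only high payment items \emph{and} I should note $z\ge 0$; the cleanest route is: a raised agent $j$ has $\alpha_j = 1/k$, so $X_j\subseteq H$ (low items aren't MPB, per the remark after Lemma~\ref{lemma:1k_equilibrium_properties}), hence $p(X_j) = k|X_j|$; but then $i$ pEF1 towards $j$ gives $\hat p_i \le p(X_j) = k|X_j|$, and since $X_i$ has a high payment item, $p(X_i) - k \le k|X_j|$, so $c_i(X_i - e') = c_i(X_i) - k \le k|X_j| = p(X_j) = c_i(X_j)$ (the last equality using $X_j\subseteq H$ and $c_i(e) = k = \alpha_{i,e}p(e)$ isn't automatic — rather $c_i(X_j)\ge p(X_j)$ suffices), contradicting $i$ not EFX towards $j$. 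So $j\in U$.

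The main obstacle I anticipate is bookkeeping the interaction between the raised/unraised status and the $\{z,z+1,z+2\}$ earning partition — in particular making sure that the Payment Invariant plus Lemma~\ref{lemma:Group_Invariant}(3) genuinely pin down $\alpha_i$ and $\alpha_j$ in every case, rather than leaving a gap where, say, $i$ is raised with a mixed bundle. Once that is nailed down, each bullet is a short pEF1-to-EFX comparison of the kind already carried out in Lemma~\ref{lemma:property_of_not2EFX} and Lemma~\ref{lemma:payment<k}, specialized to $k=2$ where the earning gap is exactly $2$ and a single high payment item has payment exactly $2$.
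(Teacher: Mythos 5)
Your first two bullets are essentially sound and match the paper (the paper dismisses the group membership as immediate from pEF1/pEFX, which your payment-gap argument correctly fills in), apart from one arithmetic slip: removing a low payment item from $X_i$ with $p(X_i)=z+2$ leaves payment $z+1$, not $p(X_j)=z$, so your route to $X_i\cap H\neq\emptyset$ does not close; the correct one-liner is that pEF1 forces $\max_{e\in X_i}p(e)\geq p(X_i)-p(X_j)=2$. The more serious recurring problem is in bullets three and four: in both you derive a ``contradiction'' by exhibiting \emph{one} item (the high payment item of $X_i$) whose removal eliminates envy. That contradicts ``not EF1,'' not ``not EFX.'' Not EFX only asserts that \emph{some} removal fails, and the failing removal can be a low payment item, leaving $c_i(X_i-e')=z+1$. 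In bullet three this is repairable exactly as the paper does it: the case hypothesis $c_i(e)=k>1=p(e)$ for some $e\in X_j$ gives $c_i(X_j)\geq p(X_j)+k-1=z+1\geq\max_{e'\in X_i}c_i(X_i-e')$, covering all removals at once. You state this extra $+1$ but never use it, so as written the contradiction does not follow.

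In bullet four the gap is not repairable by any static argument, and this is where your proposal genuinely diverges from (and falls short of) the paper. Suppose $j$ is raised, so $X_j\subseteq H$ and $p(X_j)=z$; by your own bullet three, $X_j\subseteq\MPB_i$, hence $c_i(X_j)=p(X_j)=z$, while removing a low payment item from $X_i$ leaves $c_i(X_i-e')=z+1>z$. So agent $i$ really is not EFX towards $j$ in this configuration, which is fully consistent with the Payment Invariant and Lemma~\ref{lemma:Group_Invariant} --- there is no contradiction to be had from the current state of the allocation alone. The paper rules this configuration out by an entirely different, \emph{historical} argument about the execution of Algorithm~\ref{alg:ef1andpo}: it splits on whether $i$ ever received items, compares the last round $t_i$ in which $i$'s earning increased with the last round $t_j$ in which $j$'s earning changed, invokes the monotonicity of the least earner's payment (Lemma~\ref{lemma:Earning_Invariant}) to conclude $t_j<t_i$, and then uses the smallest-index tie-breaking rule for selecting the least earner to derive the contradiction. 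This dynamic argument is the substantive content of the fourth bullet, and your proposal contains no substitute for it.
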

\begin{proof}
    The first statement is trivial because otherwise, $i$ is pEFX towards $j$.
    For the second statement, observe that we must have $\alpha_i = 1$, as otherwise ($\alpha_i = \frac{1}{k}$) we have $c_i(e) = 1$ and $p(e) = k$ for all $e\in X_i$, which implies that $i$ is EFX towards $j$ since all items in $X_i$ have the same cost.
    By Property (3) of Lemma~\ref{lemma:Group_Invariant}, we have $i\in U$.
    Since $i$ is pEF1 towards $j$ but not EFX (thus not pEFX) towards $j$, $X_i$ must contain both items with payment $1$ and payment $k$.
    Hence the second statement follows.
    
    Furthermore, we have $c_i(e) = \alpha_{i,e}\cdot p(e) \geq \alpha_i\cdot p(e) = p(e)$ for all $e\in X_j$.
    Moreover, if there exists any $e\in X_j$ such that $c_i(e) > p(e)$, then we have $c_i(e) = k$ and $p(e) = 1$, which implies
    \begin{equation*}
        c_i(X_j) \geq p(X_j) + k-1 \geq p(X_i) - 1 = c_i(X_i) - 1 \geq \max_{e\in X_i} \{c_i(X_i - e)\},
    \end{equation*}
    where the second inequality holds since $i$ is pEF1 towards $j$.
    Hence we have a contradiction with $i$ being not EFX towards $j$, which implies that $\alpha_{i,e} = 1$ for all $e\in X_j$ and $X_j \subseteq \MPB_i$.

    Finally, we show that $j$ is unraised.
    Assume the contrary that $j$ is raised, we have $\alpha_j = 1/k$, which implies $c_j(e) = 1$ and $p(e) = k$ for all $e\in X_j$.
    Recall that $i$ is unraised, which implies that $i$ has never lost any item during Algorithm~\ref{alg:ef1andpo} (only raised agents give out items).

    We first consider the case that $i$ has never received any item during Algorithm~\ref{alg:ef1andpo}.
    Note that $X_i \cap H \neq \emptyset$ and $X_i = X^0_i$.
    Property (2) of Lemma~\ref{lemma:leximin} implies that $i \in N_R$.
    By the way we partition the agents into groups, the earning of representative agents in groups $N_1,N_2,\ldots,N_R$ are non-increasing, together with the fact that agents in the same group are pEF1 towards each other (see Property (2) of Lemma~\ref{lemma:Group_Invariant}), we obtain $p^0(X_{l^0}^0) \geq p(X_i) - k = z$.
    Then by Property (1) of Lemma~\ref{lemma:Earning_Invariant}, we have $p^t(X_{l^t}^t) \geq z$ for all $t$, i.e., the earning of the least earner in every round is at least $z$.
    \begin{itemize}
        \item If $j$ has lost some items, then for the last time $j$ loses some item, $j$ must be the big earner in some round $t$ and the earning of $j$ decreased from $z + k$ to $z$. However, this would imply $p^t(X_{l^t}^t) < z$, which contradicts the conclusion we draw above.
        
        \item If $j$ has never lost any item, assuming that $j \in N_r$, there exists a round $t$ such that agents in group $N_r$ raise their earnings. Since $j$ never loses any item, in round $t$, the earning of $j$ increased from $\frac{z}{k}$ to $z$. Then we have 
        $$
        z = p^t(X_i^t) - k = \hat{p}^t_i \leq \hat{p}^t_{b^t} \leq p^t(X_j^t) = \frac{z}{k},
        $$
        where the first inequality holds since $b^t$ is the big earner in round $t$, and the second inequality holds since agents in $N_r$ are pEF1 towards each other. Therefore, we have a contradiction.
    \end{itemize}
    
    Next, we consider the case when $i$ has received some items.
    We consider the last round $t_i$ that the earning of agent $i$ increased (from $z$ to $z + k$).
    Let $t_j$ denote the last round that the earning of $j$ changed, we show that $t_j < t_i$ under three cases:
    \begin{itemize}
        \item At round $t_j$, the earning of $j$ decreased from $z + k$ to $z$. 
        In this case, we have $p^{t_j}(X_{l_{t_j}}^{t_j}) < z$, i.e., the earning of the least earner in round $t_j$ is less than $z$.
        Since the earning of the least earner is non-decreasing by Property (1) of Lemma~\ref{lemma:Earning_Invariant}, we have $t_j < t_i$.
        \item At round $t_j$, the earning of $j$ increased from $z - k$ to $z$. 
        In other words, agent $j$ is the least earner in round $t_j$.
        Then by Property (1) of Lemma~\ref{lemma:Earning_Invariant} and same argument as above, we have $t_j < t_i$.
        \item 
        {At round $t_j$, agent $j$ is raised and her earning increase from $z/k$ to $z$. Thus the least earner in round $t_j$ has earning at most $z/k < z$. By the same argument as above we also have $t_j < t_i$. }
    \end{itemize}
    
    Therefore, we can conclude that $t_i > t_j$. 
    By the definition of $t_j$, the earning of agent $j$ is $z$ at the beginning of round $t_i$.
    Moreover, since agent $j$ is raised and agent $i$ is not, we have $j < i$ (by Property (4) of Lemma~\ref{lemma:Group_Invariant}).
    However, this implies that in round $t_i$, agent $i$ should not be selected as the least spender, as we break by choosing agents with a smaller index. 
    Therefore we have a contradiction.
\end{proof}

We now proceed to introduce the algorithm for reallocating items. 
The design of our algorithm heavily relies on Lemma~\ref{lemma:property_of_notEFX}, which establishes that if agent $i$ is not EFX towards $j$, then both $i$ and $j$ must be unraised agents possessing nice properties. 

\smallskip

\begin{algorithm}[H]
    \caption{Reallocation algorithm for $\{1, 2\}$-instances} \label{alg:efxandpo}
    \SetKw{Break}{Break}
    \KwIn{pEF1 equilibrium $(\bX, \bp)$ computed by Algorithm \ref{alg:ef1andpo}, set of unraised agents $U \subseteq N$}
    \While{there exists agent $i$ that is not EFX towards agent $j$}{
        \tcp{we have $i \in N^{z+2}\cap U$ and $j\in N^z\cap U$ by Lemma~\ref{lemma:property_of_notEFX}}
        pick an arbitrary $e_i \in X_i$ s.t. $p(e_i) = k$; \qquad\qquad\quad \tcp{$e_i \in \MPB_j$, by Lemma~\ref{lemma:k-in-MPB-U}}
        \If{there exists an item $e_j \in X_j$  such that $p(e_j) = 1$}
        {
            pick $e_j \in X_j$ such that $p(e_j) = 1$;    \qquad\qquad\qquad \tcp{$e_j \in \MPB_i$, by Lemma~\ref{lemma:property_of_notEFX}}
        $X_i \gets X_i + e_j - e_i$, $X_j \gets X_j + e_i - e_j$.
        }
        \Else {
            $X_i \gets X_i - e_i$, $X_j \gets X_j + e_i$.
        }
    }
\end{algorithm} 

For an illustration for the execution of the algorithm, see Example~\ref{example1:execution_of_12reallocation_1} and Example~\ref{example1:execution_of_12reallocation_2}.

\begin{example}\label{example1:execution_of_12reallocation_1}
    Consider the following $\{1, 2\}$-instance with six agents. 
    Before the reallocation, the payment of the bundles are as follows: $p(X_1) = 4$, $p(X_2) = 5$, $p(X_3) = 5$, $p(X_4) = 5$, $p(X_5) = 6$, and $p(X_6) = 6$. 
    In this instance, agent $5$ holds both high and low payment items. Suppose that agent $5$ is not EFX towards agent $1$, whose bundle also contains items with payments of $1$ and $2$. The algorithm identifies items $e_5 \in X_5$ and $e_1 \in X_1$ such that $p(e_5) = 2$ and $p(e_1) = 1$. 
    Then, agent $1$ receives $e_5$, and agent $5$ receives $e_1$. See Figure~\ref{figure:exchange-1-2} for an illustration of this example.
\end{example}

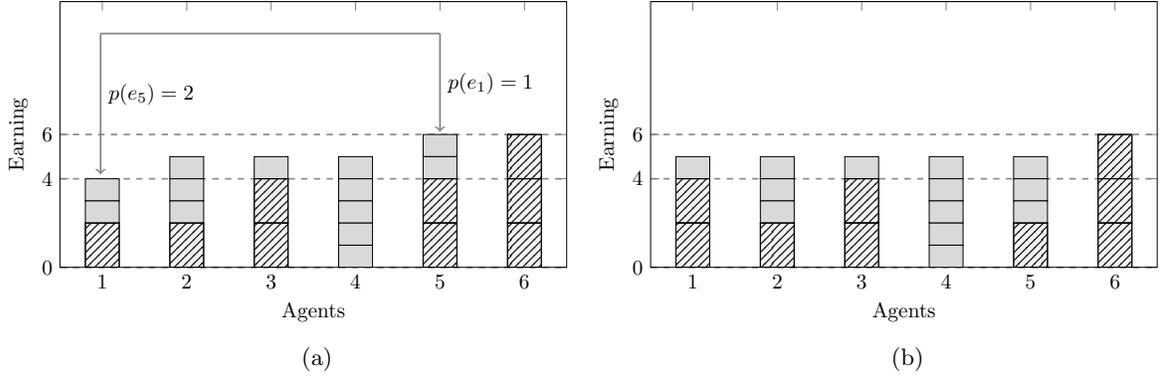
\begin{figure}[htb]
    \centering
    \begin{subfigure}[c]{0.45\textwidth} 
        \centering
        \begin{tikzpicture}[scale=0.8]
        \begin{axis}[
            ybar stacked,
            ymax=12,
            ymin=0,
            enlarge x limits=0.1,
            enlarge y limits=false,
            bar width=16pt,
            height=6cm, width=10cm,
            ylabel={Earning},
            ytick={0, 4, 6},
            xlabel={Agents},
            symbolic x coords={1, 2, 3, 4, 5, 6}, 
            xtick=data,
            xticklabels={1, 2, 3, 4, 5, 6},
            ymajorgrids=true,
            grid style={dashed, line width=0.8pt, gray!90},
        ]
        \addplot+[draw=black,postaction={pattern=north east lines}, fill=gray!10] coordinates {
            (1,2) (2,2) (3,0) (4,0) (5,2) (6,2)
        };
        \addplot+[draw=black,postaction={pattern=north east lines}, fill=gray!10] coordinates {
            (1,0) (2,0) (3,2) (4,0) (5,2) (6,2)
        };
        \addplot+[draw=black,postaction={pattern=north east lines}, fill=gray!10] coordinates {
            (1,0) (2,0) (3,2) (4,0) (5,0) (6,2)
        };
        \addplot+[draw=black, fill=gray!30] coordinates {
            (1,0) (2,1) (3,1) (4,1) (5,0) (6,0)
        }; 
        \addplot+[draw=black, fill=gray!30] coordinates {
            (1,0) (2,0) (3,0) (4,1) (5,1) (6,0)
        }; 
        \addplot+[draw=black, fill=gray!30] coordinates {
            (1,0) (2,1) (3,0) (4,1) (5,1) (6,0)
        }; 
        \addplot+[draw=black, fill=gray!30] coordinates {
            (1,1) (2,1) (3,0) (4,1) (5,0) (6,0)
        }; 
        \addplot+[draw=black, fill=gray!30] coordinates {
            (1,1) (2,0) (3,0) (4,1) (5,0) (6,0)
        }; 

        \node[anchor=north] at (rel axis cs:0.31, -0.18) {\(N_L^{t}\)};
        \node[anchor=north] at (rel axis cs:0.80, -0.18) {\(N_H^{t}\)};
        
        \draw[<-, thick, gray] (rel axis cs:0.08,0.35) -- (rel axis cs:0.08,0.88);
        \draw[-, thick, gray] (rel axis cs:0.08,0.88) -- (rel axis cs:0.75,0.88);
        \draw[->, thick, gray] (rel axis cs:0.75,0.88) -- (rel axis cs:0.75,0.51);
        \node[anchor=north] at (rel axis cs:0.85,0.76) {$p(e_1) = 1$};
        \node[anchor=north] at (rel axis cs:0.18, 0.72) {$p(e_5) = 2$};
        \end{axis}
        \end{tikzpicture}
        \caption{\hspace{-1.47cm}     }
    \end{subfigure}
    \hspace{0.6cm}
    \begin{subfigure}[c]{0.45\textwidth} 
        \centering
        \begin{tikzpicture}[scale=0.8]
        \begin{axis}[
            anchor=north,
            yshift=-3cm,
            ybar stacked,
            ymax=12,
            ymin=0,
            enlarge x limits=0.1,
            enlarge y limits=false,
            bar width=16pt,
            height=6cm,
            width=10cm,
            ylabel={Earning},
            ytick={0, 4, 6},
            xlabel={Agents},
            symbolic x coords={1,2,3,4,5,6},
            xtick=data,
            xticklabels={1, 2, 3, 4, 5, 6},
            ymajorgrids=true,
            grid style={dashed, line width=0.8pt, gray!90}
        ]
        \addplot+[draw=black,postaction={pattern=north east lines}, fill=gray!10] coordinates {
            (1,2) (2,2) (3,2) (4,0) (5,2) (6,2)
        };
        \addplot+[draw=black,postaction={pattern=north east lines}, fill=gray!10] coordinates {
            (1,0) (2,0) (3,2) (4,0) (5,0) (6,2)
        };
        \addplot+[draw=black,postaction={pattern=north east lines}, fill=gray!10] coordinates {
            (1,0) (2,0) (3,0) (4,0) (5,0) (6,2)
        };

        \addplot+[draw=black,postaction={pattern=north east lines}, fill=gray!10] coordinates {
            (1,2) (2,0) (3,0) (4,0) (5,0) (6,0)
        };
        \addplot+[draw=black, fill=gray!30] coordinates {
            (1,0) (2,1) (3,0) (4,0) (5,1) (6,0)
        };

        \addplot+[draw=black, fill=gray!30] coordinates {
            (1,0) (2,0) (3,1) (4,1) (5,0) (6,0)
        }; 
        \addplot+[draw=black, fill=gray!30] coordinates {
            (1,0) (2,0) (3,0) (4,1) (5,1) (6,0)
        }; 
        \addplot+[draw=black, fill=gray!30] coordinates {
            (1,0) (2,1) (3,0) (4,1) (5,1) (6,0)
        }; 
        \addplot+[draw=black, fill=gray!30] coordinates {
            (1,0) (2,1) (3,0) (4,1) (5,0) (6,0)
        }; 
        \addplot+[draw=black, fill=gray!30] coordinates {
            (1,1) (2,0) (3,0) (4,1) (5,0) (6,0)
        }; 
        \end{axis}

        \end{tikzpicture}
        \caption{\hspace{-1.47cm}  }
    \end{subfigure}
    \caption[c]{The illustration of Example~\ref{example1:execution_of_12reallocation_1}. ($a$) In this instance, $z = 4$, with $N^z = \{1\}$, $N^{z + 1} = \{2,3,4\}$ and $N^{z + 2} = \{5,6\}$. ($b$) The earning status of agents after reallocation, where $p(X_1) = p(X_5) = 5$, i.e., both of them joins $N^{z+1}$ after the reallocation.}
    \label{figure:exchange-1-2}
\end{figure}

\begin{example}\label{example1:execution_of_12reallocation_2}
    Consider another $\{1, 2\}$-instance with six agents where $p(X_1) = 4$, $p(X_2) = 4$, $p(X_3) = 4$, $p(X_4) = 5$, $p(X_5) = 6$, and $p(X_6) = 6$. 
    Again, agent $5$ holds both high and low payment items, and suppose that agent $5$ is not EFX towards agent $1$, whose bundle contains only high payment items. 
    The algorithm will select an item $e_5 \in X_5$ such that $p(e_5) = 2$, and this item will be reallocated to agent $1$. 
    See Figure~\ref{figure:reallocate-2} for an illustration of this example.
\end{example}

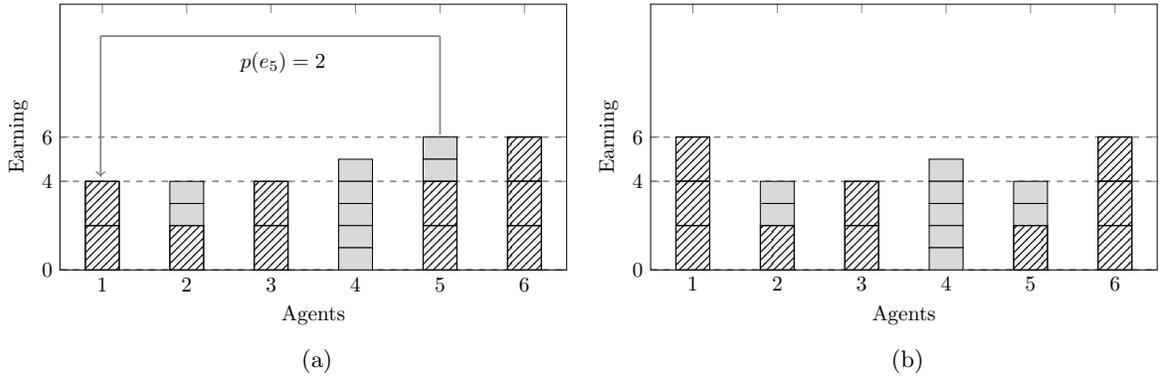
\begin{figure}[htb]
    \centering
    \begin{subfigure}[c]{0.45\textwidth} 
        \centering
        \begin{tikzpicture}[scale=0.8]
        \begin{axis}[
            ybar stacked,
            ymax=12,
            ymin=0,
            enlarge x limits=0.1,
            enlarge y limits=false,
            bar width=16pt,
            height=6cm, width=10cm,
            ylabel={Earning},
            ytick={0, 4, 6},
            xlabel={Agents},
            symbolic x coords={1, 2, 3, 4, 5, 6}, 
            xtick=data,
            xticklabels={1, 2, 3, 4, 5, 6},
            ymajorgrids=true,
            grid style={dashed, line width=0.8pt, gray!90},
        ]
        \addplot+[draw=black,postaction={pattern=north east lines}, fill=gray!10] coordinates {
            (1,2) (2,2) (3,2) (4,0) (5,2) (6,2)
        };
        \addplot+[draw=black,postaction={pattern=north east lines}, fill=gray!10] coordinates {
            (1,2) (2,0) (3,2) (4,0) (5,2) (6,2)
        };
        \addplot+[draw=black,postaction={pattern=north east lines}, fill=gray!10] coordinates {
            (1,0) (2,0) (3,0) (4,0) (5,0) (6,2)
        };
        \addplot+[draw=black, fill=gray!30] coordinates {
            (1,0) (2,0) (3,0) (4,1) (5,0) (6,0)
        }; 
        \addplot+[draw=black, fill=gray!30] coordinates {
            (1,0) (2,0) (3,0) (4,1) (5,1) (6,0)
        }; 
        \addplot+[draw=black, fill=gray!30] coordinates {
            (1,0) (2,1) (3,0) (4,1) (5,1) (6,0)
        }; 
        \addplot+[draw=black, fill=gray!30] coordinates {
            (1,0) (2,1) (3,0) (4,1) (5,0) (6,0)
        }; 
        \addplot+[draw=black, fill=gray!30] coordinates {
            (1,0) (2,0) (3,0) (4,1) (5,0) (6,0)
        }; 

        \node[anchor=north] at (rel axis cs:0.31, -0.18) {\(N_L^{t}\)};
        \node[anchor=north] at (rel axis cs:0.80, -0.18) {\(N_H^{t}\)};
        
        \draw[<-, thick, gray] (rel axis cs:0.08,0.35) -- (rel axis cs:0.08,0.88);
        \draw[-, thick, gray] (rel axis cs:0.08,0.88) -- (rel axis cs:0.75,0.88);
        \draw[-, thick, gray] (rel axis cs:0.75,0.88) -- (rel axis cs:0.75,0.51);
        \node[anchor=north] at (rel axis cs:0.44,0.85) {$p(e_5) = 2$};
        \end{axis}
        \end{tikzpicture}
        \caption{\hspace{-1.47cm}     }
    \end{subfigure}
    \hspace{0.6cm}
    \begin{subfigure}[c]{0.45\textwidth} 
        \centering
        \begin{tikzpicture}[scale=0.8]
        \begin{axis}[
            anchor=north,
            yshift=-3cm,
            ybar stacked,
            ymax=12,
            ymin=0,
            enlarge x limits=0.1,
            enlarge y limits=false,
            bar width=16pt,
            height=6cm,
            width=10cm,
            ylabel={Earning},
            ytick={0, 4, 6},
            xlabel={Agents},
            symbolic x coords={1,2,3,4,5,6},
            xtick=data,
            xticklabels={1, 2, 3, 4, 5, 6},
            ymajorgrids=true,
            grid style={dashed, line width=0.8pt, gray!90}
        ]
        \addplot+[draw=black,postaction={pattern=north east lines}, fill=gray!10] coordinates {
            (1,2) (2,2) (3,0) (4,0) (5,2) (6,2)
        };
        \addplot+[draw=black,postaction={pattern=north east lines}, fill=gray!10] coordinates {
            (1,2) (2,0) (3,2) (4,0) (5,0) (6,2)
        };
        \addplot+[draw=black,postaction={pattern=north east lines}, fill=gray!10] coordinates {
            (1,2) (2,0) (3,2) (4,0) (5,0) (6,2)
        };
        
        \addplot+[draw=black, fill=gray!30] coordinates {
            (1,0) (2,0) (3,0) (4,0) (5,0) (6,0)
        }; 
        
        \addplot+[draw=black, fill=gray!30] coordinates {
            (1,0) (2,0) (3,0) (4,1) (5,0) (6,0)
        }; 
        \addplot+[draw=black, fill=gray!30] coordinates {
            (1,0) (2,0) (3,0) (4,1) (5,1) (6,0)
        }; 
        \addplot+[draw=black, fill=gray!30] coordinates {
            (1,0) (2,1) (3,0) (4,1) (5,1) (6,0)
        }; 
        \addplot+[draw=black, fill=gray!30] coordinates {
            (1,0) (2,1) (3,0) (4,1) (5,0) (6,0)
        }; 
        \addplot+[draw=black, fill=gray!30] coordinates {
            (1,0) (2,0) (3,0) (4,1) (5,0) (6,0)
        }; 
        \end{axis}

        \end{tikzpicture}
        \caption{\hspace{-1.47cm}  }
    \end{subfigure}
    \caption[c]{The illustration of Example~\ref{example1:execution_of_12reallocation_2}. ($a$) The earning status of agents before reallocation. ($b$) The earning status of agents after reallocation, where agent $1$ receives only high payment items. }
    \label{figure:reallocate-2}
\end{figure}

\begin{lemma} \label{lemma:EFX_invariants_hold}
    Invariant~\ref{invariant:EFX_payment} is maintained throughout the whole algorithm.
\end{lemma}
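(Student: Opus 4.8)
The plan is to prove Invariant~\ref{invariant:EFX_payment} by induction on the number of completed iterations of the while-loop in Algorithm~\ref{alg:efxandpo}, maintaining simultaneously that $(\bX,\bp)$ is an equilibrium, that $p(X_\ell)\in\{z,z+1,z+2\}$ for every $\ell\in N$, and that $(\bX,\bp)$ is pEF1. For the base case, the input is the pEF1 $\{1,2\}$-payment equilibrium returned by Algorithm~\ref{alg:ef1andpo}, so it is already an equilibrium and pEF1; since pEF1 forces $\max_\ell p(X_\ell)\le\min_\ell p(X_\ell)+k=z+2$ and payments are integral, all earnings lie in $\{z,z+1,z+2\}$. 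Throughout I will use two auxiliary facts that hold whenever the loop body actually runs: first, $z\ge1$, since by Lemma~\ref{lemma:property_of_notEFX} the selected agent $i$ satisfies $X_i\cap H\ne\emptyset$ and $X_i\cap L\ne\emptyset$, so $p(X_i)=z+2\ge k+1=3$; second, since $j\in N^z$ by Lemma~\ref{lemma:property_of_notEFX}, every agent $\ell$ is pEF1 towards $j$, whence $\min_{e\in X_\ell}p(X_\ell-e)\le p(X_j)=z$, and moreover no bundle is empty because every earning is at least $z\ge1$.

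For the inductive step, assume the three properties hold at the start of a round in which Algorithm~\ref{alg:efxandpo} finds $i$ not EFX towards $j$. By Lemma~\ref{lemma:property_of_notEFX} we have $i\in N^{z+2}\cap U$, $j\in N^z\cap U$, $X_i\cap H\ne\emptyset$, $X_i\cap L\ne\emptyset$ and $X_j\subseteq\MPB_i$, so the payment-$k$ item $e_i\in X_i$ picked by the algorithm exists. The equilibrium property survives because the only items changing owners are $e_i$ --- passed to $j$, and $e_i\in\MPB_j$ by Lemma~\ref{lemma:k-in-MPB-U} since $e_i\in H$ and $j\in U$ --- and, when $X_j$ contains a payment-$1$ item $e_j$, also $e_j$ itself, passed to $i$ with $e_j\in X_j\subseteq\MPB_i$; the payment vector is never altered. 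For the earnings: if $X_j$ has a payment-$1$ item, then $p(X_i)$ moves from $z+2$ to $(z+2)-k+1=z+1$ and $p(X_j)$ from $z$ to $(z-1)+k=z+1$; otherwise $X_j\subseteq H$, $p(X_i)$ drops to $(z+2)-k=z$, and $p(X_j)$ rises to $z+k=z+2$; all other earnings are unchanged. So the equilibrium and the earning bounds are maintained.

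The main work, and the step I expect to be the most delicate, is re-verifying pEF1 for every ordered pair after the reallocation; I would organize it around the facts that only $i$ and $j$ change bundles, that the new earning $p(X_i')$ of $i$ lies in $\{z,z+1\}$ and the new earning $p(X_j')$ of $j$ lies in $\{z+1,z+2\}$, and the inequality $\min_{e\in X_\ell}p(X_\ell-e)\le z$ noted above. Pairs inside $N\setminus\{i,j\}$ are untouched. For $(\ell,j)$ with $\ell\ne j$, increasing $j$'s earning only relaxes the constraint. For $(\ell,i)$ with $\ell\ne i$, we have $\min_{e\in X_\ell}p(X_\ell-e)\le z\le p(X_i')$, which is pEF1. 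It remains to check the pairs in which $i$ or $j$ is the envying agent, and here I would exhibit a cheap item in the new bundle: $X_i'$ still contains a payment-$1$ item (either a surviving member of $X_i\cap L$ or $e_j$), so $\min_{e\in X_i'}p(X_i'-e)\le p(X_i')-1\le z$; and $X_j'$ contains the payment-$k$ item $e_i$, so $\min_{e\in X_j'}p(X_j'-e)\le p(X_j')-k\le z$. Since every earning is at least $z$ and no bundle is empty, these two bounds give pEF1 from $i$ towards everyone and from $j$ towards everyone (in fact, when $X_j$ had no payment-$1$ item we get $p(X_i')=z$, so $i$ is envy-free). This completes the three parts and hence the induction; the one subtlety worth noting is that the bound $\min_{e}p(X_\ell-e)\le z$ must still hold for $i$ and $j$ going into the next round, which is automatic because it is re-derived from the pEF1 hypothesis at the start of every round.
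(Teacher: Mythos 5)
Your proof is correct and follows essentially the same route as the paper's: induct over rounds, use Lemmas~\ref{lemma:k-in-MPB-U} and~\ref{lemma:property_of_notEFX} to certify MPB-feasibility of the exchanged items, compute the new earnings ($z+1,z+1$ in the swap case and $z,z+2$ in the other), and conclude pEF1 from the earnings staying in $\{z,z+1,z+2\}$ with every $N^{z+2}$-agent retaining a high-payment item. The only difference is that you spell out the pair-by-pair pEF1 verification (and the useful observation that $z\ge 1$, so bundles are non-empty), which the paper compresses into one line; the substance is identical.
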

\begin{proof}
Since the invariant holds for the pEF1 equilibrium computed by Algorithm~\ref{alg:ef1andpo}, it suffices to show that if Invariant~\ref{invariant:EFX_payment} holds at the beginning of some round $t$, then it also holds at the end of the round.
Note that the item reallocation only happens between agents $i, j$.
If they swap items following Line 3 - 5, then after the swap both agents have earning $z+1$ (see Figure~\ref{figure:exchange-1-2} for an example); if we reallocate an item following Line 7, then after the reallocation we have $p(X_i) = z$ and $p(X_j) = z+2$ (see Figure~\ref{figure:reallocate-2} for an example).
Thus, the first part of the invariant is maintained.
By Lemma~\ref{lemma:k-in-MPB-U} and~\ref{lemma:property_of_notEFX}, the new item received by an agent must be an MPB item.
Thus, $(\bX, \bp)$ is an equilibrium.
Moreover, it is maintained that every agent $i\in N^{z+2}$ holds at least one high payment item.
Thus, pEF1 is guaranteed.
\end{proof}

Given all lemmas and invariants, we can now prove the main result of this section.

\begin{theorem}\label{theorem:EFX_k=2}
    For any $\{1, 2\}$-instances with $n$ agents and $m$ indivisible chores, EFX and PO allocations exist and can be computed in polynomial time.
\end{theorem}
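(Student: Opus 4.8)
The plan is to establish three facts about Algorithm~\ref{alg:efxandpo}, executed on the pEF1 $\{1,2\}$-payment equilibrium produced by Algorithm~\ref{alg:ef1andpo} (which exists and is computable in polynomial time by Theorem~\ref{the:ef1+po-bivalued}): (a)~when the algorithm halts the returned allocation is EFX; (b)~the allocation is PO at every step; and (c)~the algorithm halts after polynomially many rounds, each round costing polynomial time. Facts (a) and (b) are immediate from what is already in place. For (a), the while-loop of Algorithm~\ref{alg:efxandpo} exits exactly when no agent fails to be EFX towards another, so the output is EFX. For (b), Lemma~\ref{lemma:EFX_invariants_hold} shows Invariant~\ref{invariant:EFX_payment} is preserved throughout, in particular that $(\bX,\bp)$ is always an equilibrium, and every equilibrium allocation is PO by the First Welfare Theorem proved for equilibria in the preliminaries. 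The substance of the proof is (c).

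To bound the number of rounds I would track which agents can still be touched by a future reallocation. Because Invariant~\ref{invariant:EFX_payment} keeps $(\bX,\bp)$ pEF1, whenever the loop fires the selected pair satisfies the hypothesis of Lemma~\ref{lemma:property_of_notEFX}, so the envier $i$ lies in $N^{z+2}\cap U$ and the envied agent $j$ lies in $N^{z}\cap U$; in particular an agent with earning $z+1$ is never selected in either role. Consider the two branches. In the swap branch (Lines 3--5), $i$ gives up a high item of payment $k=2$ and takes a low item of payment $1$, so afterwards both $i$ and $j$ have earning $z+1$; since their bundles then never change, both are \emph{retired} (never selected again). In the transfer branch (Line 7), $X_j$ contains no low item, so after $j$ receives the high item $e_i$ we have $X_j\subseteq H$, $p(X_j)=z+2$, and $j\in U$; by Lemma~\ref{lemma:k-in-MPB-U} every item of $X_j$ costs $k=2$ to $j$, whence $c_j(X_j-e)=z$ for every $e\in X_j$, while $c_j(X_l)\ge p(X_l)\ge z$ for every other agent $l$ by Invariant~\ref{invariant:EFX_payment}. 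Hence $j$ is EFX towards all agents; as $j$ stays in $N^{z+2}$ it can never be the envied agent, and being EFX towards everyone rules it out as the envier, so $j$ is retired. (Agent $i$ drops to $N^{z}$ and may act again, but only as an envied agent, since re-entering $N^{z+2}$ would require $i$ to first be the envied agent of a transfer-branch round, which retires it.) Assigning to each round its envied agent therefore gives an injection from rounds into $N$, so there are at most $n$ rounds.

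Each round locates an envier--envied pair by scanning all ordered pairs of agents and all items in $O(n^2 m)$ time and then performs $O(1)$ bundle updates; together with the polynomial-time computation of the initial equilibrium this makes the whole procedure run in polynomial time, which establishes the theorem. The main obstacle is part (c): one must argue carefully that the swap branch retires both participants and, more delicately, that the transfer branch yields on the envied side an agent holding only high-payment items that is --- and forever remains --- EFX towards all agents. The structural information in Lemma~\ref{lemma:property_of_notEFX} (excluding raised agents, and fixing the bundle composition of $i$ and $j$) together with Lemma~\ref{lemma:k-in-MPB-U} is exactly what makes this ``retirement'' bookkeeping work; without it an agent could oscillate in and out of $N^{z}$ and termination would not follow.
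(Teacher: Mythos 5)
Your proposal is correct and follows essentially the same route as the paper: EFX from the while-condition, PO from the preserved equilibrium (Lemma~\ref{lemma:EFX_invariants_hold} and Invariant~\ref{invariant:EFX_payment}), and an $O(n)$ bound on the number of reallocation rounds via Lemma~\ref{lemma:property_of_notEFX} and Lemma~\ref{lemma:k-in-MPB-U}. The only difference is in the termination bookkeeping: you retire the \emph{envied} agent of each round (yielding an injection from rounds into $N$), whereas the paper tracks the potential $\eta$, the number of agents in $N^{z+2}$ holding both high- and low-payment items, which decreases by one per round; both arguments are valid and give the same bound.
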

\begin{proof}
 Given any $\{1, 2\}$-instance, we first compute the pEF1 equilibrium in polynomial time by Theorem~\ref{the:ef1+po-bivalued}. 
 The while-condition of Algorithm \ref{alg:efxandpo} ensures that the returned allocation is EFX.
 The Pareto optimality is guaranteed by the equilibrium (see Invariant~\ref{invariant:EFX_payment}).
 In the following, we show that Algorithm \ref{alg:efxandpo} will terminate in polynomial time.
 Note that when agent $i$ is not EFX towards agent $j$, it must be the case that $i\in N^{z+2}\cap U$ and $j\in N^z\cap U$ by Lemma~\ref{lemma:property_of_notEFX}.
 In the following, we show that after each while-loop, the number of agents in $N^{z + 2}$ holding both high and low payment items strictly decreases, which implies that the algorithm terminates in $O(n)$ rounds.
 For convenience, we denote this number by $\eta$.
\begin{enumerate}
    \item If there exists an item $e_j \in X_j$ such that $p(e_j) = 1$, after swap operation we have $p(X_j) = p(X_i) = z + 1$ (in Line 3 - 5).
    Since $i$ joins $N^{z+1}$ after the swap, $\eta$ decreases by one.
    \item Otherwise, we have $p(e) = 2$ for all $e\in X_j$ and we reallocate one high payment item from $X_i$ to $X_j$ (in Line 7), after which agent $i$ joins $N^z$ and agent $j$ joins $N^{z+2}$.
    Since agent $i$ leaves $N^{z+2}$ and agent $j$ has only high payment items, $\eta$ decreases by one. 
\end{enumerate}

It is clear that each round of Algorithm \ref{alg:efxandpo} executes in $O(m)$ time. Therefore, the whole algorithm runs in polynomial time.
\end{proof}

\section{Conclusion and Open Problems}

In this work, we present a polynomial-time algorithm that computes $(2 - 1/k)$-EFX and PO allocations for $\{1, k\}$-instances, improving the state-of-the-art approximation ratio for EFX allocations for bi-valued instances. 
We also present a polynomial-time algorithm for the computation of EFX and PO allocations for $\{1,2\}$-instances.
Our results enrich and expand the growing literature on the computation and approximation of EFX allocations for chores.
However, it remains a fascinating open problem whether EFX allocations (even without the PO requirement) exist for bi-valued instances. 
Our algorithm and analysis framework (and that of Garg et al.~\cite{garg2024fair}) start from an equilibrium for chores and reallocate items maintaining MPB feasibility, which seem to have a great potential in improving other results regarding EF1/EFX and PO allocations for chores.
It would be an interesting open question to study which type of equilibrium and which specific properties of the equilibrium would be most helpful for approximating EF1/EFX and PO allocations.


\bibliography{main}
\bibliographystyle{abbrv}
\end{document}